\documentclass[conference]{IEEEtran}

\IEEEoverridecommandlockouts
\IEEEpubid{\makebox[\columnwidth]{
	979-8-3195-3997-7/26/\$31.00~\copyright2026
	IEEE \hfill} \hspace{\columnsep}\makebox[\columnwidth]{ }
}

\usepackage{amssymb}
\usepackage{amsthm}
\usepackage{mathpartir} 
\usepackage[dvipsnames]{xcolor} 
\usepackage{amsmath}
\usepackage{multicol}
\usepackage{fancyvrb}
\usepackage{hyperref}
\usepackage{cleveref}
\usepackage{xspace}

\newtheorem{theorem}{Theorem}
\newtheorem{prop}{Proposition}

\newtheorem{coro}{Corollary}
\newtheorem{definition}{Definition}

\newcommand{\NN}{\mathbb{N}}
\newcommand{\id}{\mathrm{id}}

\newcommand\gramdef{\ ::= \ } 
\newcommand{\nt}[1]{\textit{#1}} 
\renewcommand{\mid}{\ |\ }
\newcommand{\gramspace}{-0.25em} 
\newcommand{\code}[1]{\textbf{\texttt{#1}}} 
\newcommand{\cspace}{\;} 
\newcommand{\rank}[1]{rank(#1)}

\newcommand{\node}[3]{\code{node}\cspace #1\code{(}#2\code{)=}\code{(}#3\code{)}} 

\newcommand{\when}[2]{#1\cspace\code{when}\cspace#2}

\newcommand{\fby}[2]{#1\cspace\code{fby}\cspace#2}
\newcommand{\merge}[3]{\code{merge}\cspace#1\cspace#2\cspace#3}
\newcommand{\annot}{\code{:\!:}}
\newcommand\inst[1]{\code{inst} {#1}}

\newcommand{\base}[1]{base(#1)}
\newcommand{\ebase}[1]{ebase(#1)}
\newcommand{\subs}[1]{subs(#1)}

\newcommand{\cnot}[1]{\code{not}\cspace #1} 
\newcommand{\ceq}[2]{#1\cspace\code{=}\cspace #2;} 

\newcommand{\Tick}{\code{.}}

\newcommand{\on}[2]{#1\cspace\code{on}\cspace#2}

\newcommand{\var}[1]{\!\code{`}#1} 

\newcommand\cclus{{\em CC-lus}}
\newcommand\ccnew{{\em CC-new}}

\newcommand{\figref}[1]{Fig.~\ref{fig:#1}}

\newcommand{\secref}[1]{Section~\ref{sec:#1}}

\newcommand{\thmref}[1]{Theorem~\ref{thm:#1}}
\newcommand{\lemmaref}[1]{Lemma~\ref{lemma:#1}}
\newcommand{\propref}[1]{Proposition~\ref{prop:#1}}

\newif\ifextended 

\extendedtrue 

\newtheorem{ext-theorem}{Theorem}[theorem]
\newtheorem{ext-prop}{Proposition}[prop]
\newtheorem{ext-lemma}{Lemma}[lemma]
\stepcounter{lemma} 
\newtheorem{ext-coro}{Corollary}[coro]
\newtheorem{ext-definition}{Definition}[definition]

\title{
  Relaxed activation analysis of dataflow networks\\
  \Large A clock calculus for machine learning and real-time scheduling
}
\author{
    \IEEEauthorblockN{William Gaudelier}
    \IEEEauthorblockA{Inria, Paris, France
    \href{mailto:william.gaudelier@inria.fr}{william.gaudelier@inria.fr}\\}
    \and
    \IEEEauthorblockN{Albert Cohen}
    \IEEEauthorblockA{Google DeepMind, Paris, France}
    \and
    \IEEEauthorblockN{Dumitru Potop Butucaru}
    \IEEEauthorblockA{Inria, Paris, France}
    \thanks{Partially funded by the CHIPS JU Shift2SDV project. The authors would also like to thank the anonymous reviewers for their valuable comments.}
}

\IEEEspecialpapernotice{\vspace*{-7mm}}
\begin{document}

\maketitle

\begin{abstract}
  Previous work has shown that
  the simple dataflow primitives of the Lustre language allow the
  natural, semantically unambiguous, and compact representation of
  machine learning (ML) applications, including models featuring
  complex conditional execution and recurrent state.
  The Lustre \emph{clock calculus} is responsible for the static determination
  of important properties such as liveness (absence of deadlocks)
  and static memory bounds.
  Yet existing clock calculi are tailored for embedded control applications.
  We show they do not cater for the representation of
  control patterns commonly found in training algorithms, resulting in
  cumbersome expressions and inefficient compilation.
  We propose a conservative extension of Lustre's clock calculus
  addressing this limitation, thereby facilitating the embedding of
  ML models in reactive applications.
\end{abstract}

\section{Introduction\label{sec:intro}}

From embedded control theory and signal processing to data science and
machine learning (ML), dataflow formalisms are used for both high-level specification and model exchange, in formalisms such as Simulink
\cite{simulink}, Lustre \cite{lustreRTSS,Lustre,Bal08}, or \href{https://onnx.ai}{ONNX} \cite{onnx}.
They are also used as support of lower-level optimization and resource
allocation algorithms, like in Apache Flink \cite{flink},
TensorFlow graphs \cite{tfgraph}, or JAX and XLA's
\href{https://openxla.org/xla/operation_semantics}{HLO} \cite{xla}.

The dataflow paradigm is intrinsically concurrent. Within it, large
subclasses of \emph{reactive} and \emph{deterministic} dataflow
applications have been identified \cite{kahn1974semantics}
that are amenable to efficient implementation in multiple domains. However,
while determinism is desirable, it is not sufficient to ensure safe
implementation in an embedded context. In particular, it is difficult
to ensure the absence of either {\em deadlocks} or {\em unbounded
  accumulation} of values (memory leaks).

In TensorFlow graphs such activation errors are observable as {\em
  unspecified behavior} \cite{tfgraph} or {\em silent errors}
\cite{tambon2021silent}. Various {\em well-formed graph patterns} have
been proposed \cite{yu2018dynamic}, but no general analysis exists,
making the design and debugging of large graphs difficult
\cite{cai2016tensorflow} and limiting adoption. Other studies have
identified classes of dataflow applications which support scalable
deadlock/leak analysis. In Synchronous Data Flow (SDF)
\cite{GeilenBasten2010KPN}, the analysis is based on conservation laws
ensuring that the production and consumption of dataflow messages balance
each other during execution, but expressiveness is limited.
In this paper we consider the broader class of \emph{reactive, dataflow synchronous} systems \cite{CaP96}.
These found extensive use in control automation and machine learning, through
formalisms such as Simulink, Lustre, or ONNX.

\subsection{Cycles, activation, and clock calculus\label{sec:activation}}

In dataflow synchronous systems, execution is organized as a {\em
  sequence} of {\em execution cycles}. Within each execution cycle,
each dataflow node can be executed at most once, and each variable can
be assigned at most one value. Thus, the {\em activation} of every
computation and the {\em presence} of every variable in each cycle can
be represented with a stream (sequence) of Booleans indexed by the
indices of the cycle.

Symbolic representations of these streams of Booleans, known as {\em
  logical clocks}, allow determining that the dataflow model cannot
deadlock or accumulate data {\em within any execution cycle}. To provide
an intuition, consider the following Lustre {\em equation}
(statement), which is syntactically correct:

{\scriptsize\begin{verbatim}
     y = x + (x when c);
\end{verbatim}}

\noindent
It specifies that \texttt{y} is computed as the sum of \texttt{x}
and the {\em subsampling} of \texttt{x} on the Boolean condition
\texttt{c}.  At cycles where \texttt{x} has a value but \texttt{c} is
false, the second term of the addition is {\em absent} and the
computation ``diverges'': under dataflow semantics, the value of the
left-hand-side operand \texttt{x} of \texttt{+} must be stored into a
queue, possibly leading to an infinite accumulation over execution
cycles. Dataflow synchronous semantics outlaw this behavior by
requiring both inputs of \texttt{+} (or any function) to be
present (resp.\ absent) in a given cycle.
Detecting such errors can be done at
runtime, as in TensorFlow dataflow graphs, but languages such as
Lustre require, for safety and efficiency purposes, that the compiler
rejects such programs by means of low-complexity static analysis.

To this end, the programmer or compiler attach activation information to every dataflow variable in the form of a type; this ``activation'' type is known as a \emph{clock}. 
In our example, assuming that \texttt{x} has a value at every cycle, it
has clock ``\texttt{.}'', also called the {\em base clock}.
Then, the expression ``\texttt{x when c}'' will have clock ``\texttt{. on c}'', meaning that it has a value on all cycles where \texttt{c} is true.
Finally, the \texttt{+} operator requires all its inputs to have the same clock, which is not the case here, resulting in a type mismatch.
A type system providing such an activation analysis is
called a {\em clock calculus} \cite{Bal08}.



Lustre clocks have the form
``{$\on{\on{\on{\on{\Tick}{c_1}}{c_2}}{\ldots}}{c_n}$}'' where
$c_i$ is either \texttt{x}$_i$ or ``\texttt{not x}$_i$'' for some
Boolean variable \texttt{x}$_i$.
%
A computation $f$ having clock ``$\on{\on{\Tick}{a}}{b}$'' means that,
to activate $f$ at a given cycle, control first tests the value of
$a$. If the test succeeds variable $b$ may then be tested; and when $b$ is
also true $f$ is activated.
To allow this, $a$ must be present at all cycles (its clock must be
``$\Tick$''), and $b$ must be present at cycles where $a$ is true.
To ensure that programs are interpretable as Kahn process networks \cite{CaP96}, Lustre also requires that $b$ is {\em only} present at cycles where $a$ is
true---its clock must be ``$\on{\Tick}{a}$''.
Now consider a variable $c$ present at every cycle.
Expression ``$\on{\on{\Tick}{c}}{a}$'' is invalid since $a$ is not present exactly when $c$ is true.
To subsample on $a$ after subsampling on $c$, one
must introduce a new variable, e.g.\ $e$, defined as
``$\when{a}{c}$'', which is equal to $a$ when $c$ is true and is
absent otherwise. Then, the clock $ck_1=$~``$\on{\on{\Tick}{c}}{e}$''
identifies cycles where $c$ and $a$ are both tested, in
order. Symmetrically, by setting ``$f=\when{c}{a}$'' one may define the
clock $ck_2=$~``$\on{\on{\Tick}{a}}{f}$'' which first tests
$a$. Obviously, $ck_1$ and $ck_2$ are true in the same execution
cycles. However, statically determining their equality
requires interpreting the subsampling relations, which the clock calculus
of Lustre does not allow.

\begin{figure}[t]
  \vspace*{-4mm}
  {\small\begin{align*}
    \nt{prog}       &\gramdef\ \nt{node} \mid \nt{node}~\nt{prog} \\[\gramspace]
    \nt{node}       &\gramdef\ \node{\nt{id}}{\nt{id-ck-list}}{\nt{id-ck-list}} \nt{eqs} \\[\gramspace]
    \nt{id-ck-list} &\gramdef\ \nt{id-ck}\code{,}\ldots\code{,}\nt{id-ck}\\[\gramspace]
    \nt{id-ck}      &\gramdef\  \nt{id} \mid \nt{id} \color{red}~\annot~\nt{ck}\\[\gramspace]
    \nt{eqs}        &\gramdef\ \nt{eq} \mid \nt{eq}~\nt{eqs} \\[\gramspace]
    \nt{eq}         &\gramdef \ceq{\nt{id-ck}}{\nt{e}} \mid  \ceq{\nt{id-ck-list}}{\inst~{\nt{id}}\code{(}\nt{e}\code{,}\ldots\code{,}\nt{e}\code{)}}  \\[\gramspace]
    \nt{e}          &\gramdef \nt{id} \mid \nt{k} \mid \nt{id}\code{(}\nt{e}\code{,}\ldots\code{,}\nt{e}\code{)} \mid \nt{e}~\nt{op}~\nt{e} \\[\gramspace]
                    &\phantom{\gramdef\ } \mid \fby{\nt{k}}{\nt{e}} \mid \when{\nt{e}}{\color{red}\nt{sse}} \color{black} \mid 
    \merge{\color{red}\nt{sse}\color{black}}{\nt{e}}{\nt{e}} \\[\gramspace]
    \color{red}\nt{sse} &\color{red}\gramdef \nt{id} \mid \cnot{\nt{id}} \\[\gramspace]
    \color{red}\nt{ck}  &\color{red}\gramdef \color{red}\Tick  \mid \on{\nt{ck}}{\nt{sse}} \color{blue}\mid \var{\nt{id}}
  \end{align*}}
\vspace*{-7mm}
\caption{Syntax of Lustre.
  \secref{newclocks1} presents the extended clock syntax (in red).
  Clock variables (quoted, in blue) only occur during clock inference.
  Operators $\nt{op}$ ($\mathtt{+}$) and constants $\nt{k}$
  ($\mathtt{true}$, $\mathtt{42}$, constant tensors) are not detailed.
  \label{fig:lustre}}
\vspace*{-4mm}
\end{figure}

\subsection{Contribution\label{sec:contribution}}
The traditional work around is to adapt the specification itself:
when a variable with clock
``$\on{\on{\Tick}{a}}{f}$'' must be used on clock
``$\on{\on{\Tick}{c}}{e}$'', it is first converted to a variable on
clock ``$\Tick$'' by padding absent values with some default value.
This reduces the efficiency of real-time scheduling \cite{emsoft09}
and breaks modularity.
Furthermore, as we shall
see in \secref{motivation}, the representation of backpropagation
training in ML systematically raises this very problem.

We thus extend the Lustre clock calculus, henceforth named \cclus, to provide additional ordering freedom when subsampling Boolean variables, when presence permits.
The resulting clock calculus \ccnew\ is a
conservative extension of \cclus: existing clock types
are preserved with their semantics unchanged. New clocks incorporate
the aliasing and subsampling information to allow the desired
specification style and low-complexity type inference. The clocks
defined above are respectively represented as
$ck_1=$``$\on{\on{\Tick}{c}}{(\when{a}{c})}$'' and
$ck_2=$``$\on{\on{\Tick}{a}}{(\when{c}{a})}$''.
We proved the correctness of \ccnew and we implemented it in a prototype Lustre compiler.
\ifextended
\else

\noindent
{\em The extended version of this paper provides the proofs \cite{gaudelier2026relaxedactivationanalysisdataflow}.}
\fi

\subsection{Outline}
The rest of the paper is structured as follows: \secref{lustre}
introduces the Lustre language and (in \secref{motivation}) a
motivating ML example. \secref{related} discusses related
work. \secref{newclocks1} defines the extended clock calculus
and states fundamental correctness results. \secref{conclusion}
concludes.

\section{Dataflow programming in Lustre\label{sec:lustre}}

\subsection{Overall program structure}
Lustre is the prototypical dataflow synchronous language.
It has been the subject of extensive formal
analysis \cite{lushist}, with only 4 {\em
  primitive statements} \cite{Bal08} from which all others can be
derived. Our analysis naturally covers only this core language.
The core syntax is provided in \figref{lustre}.
A Lustre program consists of a sequence of {\em nodes}, each defining a {\em dataflow graph}. The program in \figref{conv} features 2 nodes.
The first implements a convolutional layer;
the second instantiates it twice to form a simple ML model.
Each node has an {\em interface} defining its name, inputs and
outputs, followed by one or more semicolon-terminated {\em equations}
(non-terminal $\nt{eq}$) that define the dataflow graph.

\begin{figure}[t]
{\scriptsize\begin{verbatim}
node conv_layer(x)=(y)
  kernel = [[1.,2.],[3.,4.]]; bias = 0.;
  y = relu(conv2d(x,kernel) + bias);
\end{verbatim}}
{\scriptsize\begin{verbatim}
node model(x)=(y)
  z = inst conv_layer(x);
  y = inst conv_layer(z);
\end{verbatim}}
\vspace*{-3mm}
  \caption{Simple convolutional layer and model in Lustre\label{fig:conv}}
\vspace*{-3mm}
\end{figure}

Library functions such as \verb|relu| or \verb|conv2d| are directly
accessible as long as they are {\em pure} and that they always terminate.
To be activated in a
cycle, a function call must receive a value on every input. When this
happens, it performs its computation and produces a value on its
output.

The definition of a node can use other Lustre nodes through a process
called {\em instantiation}.
In \figref{conv}, the \verb|model| node instantiates node
\verb|conv_layer| twice. The {\em semantics} of instantiation is
faithfully defined by inlining.
A node cannot directly or indirectly instantiate itself.

\subsection{Stateful programs \label{sec:stateful}}
\figref{conv} showcased the modeling of feed-forward networks
\cite{goodfellow2016deep} (e.g.\ convolutional, residual) where
the computations of one execution cycle are independent of those of
other cycles. 
While such networks are preferred for low-level
perceptual tasks, many higher-level activities require a temporal
context \cite{goodfellow2016deep}, e.g.\ to determine the speed or
intent of a pedestrian or car, or to remember a speed limit.
Statefulness is also required, even
in feed-forward networks, when the model incorporates scheduling
(e.g.\ pipelining) \cite{rybakov2020streaming}.
Stateful behavior in deep learning typically involves a
{\em recurrent neural network (RNN)}
\cite{goodfellow2016deep}. Recurrent layers range from the simple LSTM (Long Short-Term Memory) and GRU (Gated Recurrent Unit), to SSMs (State Space Models) such as Mamba \cite{gu2024mamba},
a low-complexity alternative for transformers.

To represent recurrences/state Lustre uses the $\code{fby}$ dataflow
primitive. In the first cycle where equation
``$\mathtt{y}\;=\;k\;\code{fby}\;\mathtt{x}$'' is active, the output
\verb|y| is assigned the constant $k$. In every subsequent cycle where
the \code{fby} is active, \verb|y| is assigned the value of \verb|x|
of the {\em previous} activation cycle. In other terms, $\code{fby}$
defines a first-order recurrence relation, just like a {\em unit
  delay} \cite{astrom1997computer} does in control theory. Using
\code{fby}, the Lustre representation of generic RNN layer
is:

{\scriptsize\begin{verbatim}
  node rnn_layer(x)=(y)
    y,next_state = inst cell_layer(x,state);
    state = init_state fby next_state;
\end{verbatim}}

\noindent
At each cycle, the RNN layer uses another node (traditionally called
the {\em cell} layer) to compute the current output and next state
from the current input and current state. The next state is propagated
to the next cycle using the $\code{fby}$ primitive.

A node can contain, directly or as part of instantiated nodes,
multiple $\code{fby}$ primitives. Each one becomes a separate state
element. Note that the state is internalized. Exposing it as a tuple
or some other data structure is not necessary, unlike functional ML
languages such as \href{https://github.com/jax-ml/jax}{JAX}. This
supports a more natural and modular specification style.

\subsection{Conditional control}
In the dataflow paradigm, computation takes place whenever inputs are ready.
Thus, for a function $f$ to execute only at cycles where a condition $c$
is true, one must make sure that $f$ only
receives its inputs at cycles where $c$ is true. This is the purpose of the
{\em subsampling} dataflow primitive $\code{when}$. The
equation ``$\mathtt{x}\;=\;\mathtt{y}\;\code{when}\;\mathtt{c}$''
defines \verb|x| to the value of \verb|y| at cycles where \verb|c| is
true, and leave it {\em absent} in all other cycles.

The $\code{merge}$ primitive does the opposite operation. The equation
``$\mathtt{x}\;=\;\code{merge}\;\mathtt{c}\;\mathtt{y}\;\mathtt{z}$''
takes as input a Boolean \verb|c| and two data inputs, the first
present when \verb|c| is true, and the second when \verb|c| is
false. At cycles where \verb|c| is true, the output \verb|x| takes its
value from \verb|y|. At cycles where \verb|c| is false, \verb|x| takes
its value from \verb|z|.
\begin{figure}[t]
{\scriptsize\begin{verbatim}
  node gmoe(x)=(y)
    w1,w2 = inst gating(x);
    c1 = (w1 = 0); c2 = (w2 = 0);
    y1 = inst expert1(x when c1);
    y2 = inst expert2(x when c2);
    y = (merge c1 y1 0)*w1 + (merge c2 y2 0)*w2;
\end{verbatim}}
  \vspace*{-3mm}
  \caption{Minimal gated mixture of experts layer in Lustre\label{fig:gmoe}}
  \vspace*{-3mm}
\end{figure}

With these two primitives, we can encode the Gated Mixture of Experts
(GMoE) layer \cite{shazeer2017outrageously} which enables the
efficient execution of LLMs by activating expert networks
on demand, as determined by a {\em gating} network. In \figref{gmoe},
we provide a minimalist version of GMoE with two experts.  Notice how
\verb|x| is subsampled on conditions \verb|c1| and \verb|c2| to ensure
that the two experts are conditionally executed.


\subsection{The Lustre clock calculus \cclus}\label{sec:cclus}


The intuition for Lustre's clock calculus has been introduced
in \secref{activation}.
Let us formalize it as a dependent type system of logical clocks.
Different (compatible) formalisms exist
\cite{Bal08,CaP96,CoP03}; here we follow \cite{Bal08}.

\subsubsection{Syntactic aspects}
The only Lustre statement that can generate a new activation condition/type is
$\code{when}$. It takes as input an expression to be subsampled and a
predicate, also called subsampling expression (non-terminal $\nt{sse}$
in \figref{lustre}). To avoid the complexity trap of general 
satisfiability, Lustre drastically limits the expressiveness of
predicates, which can only use Boolean variables and the $\code{not}$
operator.

The syntax of clocks (non-terminal $\nt{ck}$) closely follows that of
subsampling. Starting from the base clock ``$\Tick$'' defining
activation at every cycle, it allows progressively subsampling on
multiple predicates.
To support Hindley-Milner type inference, this basic syntax is
extended with {\em clock variables} of the form $\var{\mathit{id}}$
(in blue in \figref{lustre}). This allows, during inference, the
manipulation of clocks of the form
``$\on{\on{\on{\var{b}}{c_1}}{\ldots}}{c_n}$''
where $\var{b}$ is a clock variable.

\subsubsection{Clock constraints\label{sec:clkconstraints}}
Clock inference fundamentally amounts to constraint solving:
assign a clock to every variable and expression
while satisfying {\em clock constraints} coming
from 2 sources: {\em explicit clock annotations} attached by the
programmer (non-terminal $\nt{id-ck}$) and {\em
  implicit clock constraints} associated to the various dataflow
primitives and to the clocks themselves. 
Recall from \secref{intro} the implicit clock
consistency constraint: if $ck$ is a clock and $c$ a Boolean variable,
then for ``$\on{ck}{c}$'' to be well-defined we require that the clock
of $c$ is $ck$. Similarly, if $e$ is ``$\when{e^\prime}{c}$'' and the
clock of $c$ is $ck$, then the clock of $e^\prime$ is $ck$ and that of
$e$ is ``$\on{ck}{c}$''.

\begin{figure}[t]
\vspace*{-3mm}
{\hspace*{5mm}\scriptsize\begin{Verbatim}[numbers=left,xleftmargin=5mm,commandchars=\\\{\}]
node expert1(x\color{blue},train,dy\color{black})=(y\color{blue},dx\color{black})
  y = f(x,p);
  \color{blue}dxp = grad_f(x when train,p when train,dy);
  \color{blue}dx = fst(dxp); dp = snd(dxp);\color{black}
  p = init_p fby (p \color{red}+ (merge train dp 0)\color{black});
    
node gmoe(x\color{blue},train,dy\color{black})=(y\color{blue},dx\color{black})
  w1,w2\color{blue},dxg\color{black} = inst gating(x\color{blue},train,dw1,dw2\color{black});
  c1 = (w1 = 0); c2 = (w2 = 0);
  y1\color{blue},dx1\color{black} = inst expert1(x when c1\color{blue},
                        \color{blue}train when c1,dy1\color{black});
  y2\color{blue},dx2\color{black} = inst expert2(x when c2\color{blue},
                        \color{blue}train when c2,dy2\color{black});
  y = (merge c1 y1 0)*w1 + (merge c2 y2 0)*w2;
  \color{blue}dy1 = (dy*(w1 when train)) when (c1 when train);
  \color{blue}dy2 = (dy*(w2 when train)) when (c2 when train);
  \color{blue}dw1 = ((merge c1 y1 0) when train)*dy;
  \color{blue}dw2 = ((merge c2 y2 0) when train)*dy;
  \color{blue}dx  = dxg + (merge (c1 when train) dx1 0)
  \color{blue}          + (merge (c2 when train) dx2 0);
\end{Verbatim}
}
\vspace*{-2mm}
\caption{Gated mixture of experts (of \figref{gmoe}, in black), with
  gradient backpropagation (in blue) and parameter update (in red). In
  \texttt{expert1}, \texttt{grad\_f} is the gradient of function
  \texttt{f} (itself a function), and \texttt{init\_p} is the initial
  value of the parameter \texttt{p} (a
  constant).\label{fig:example1train}}
\vspace*{-3mm}
\end{figure}



\subsubsection{Hindley-Milner type system\label{sec:HM}}
To enforce clock constraints, the clock calculus
builds a {\em context} $\Gamma$ storing mutually consistent clock
assignments for all program variables and all constants (the
same constant appearing twice in a node appears twice in the
context). The typing {\em rules} of \figref{clock-rules} (in black)
then allow the {\em derivation} of {\em judgments} of two types:
\begin{itemize}
\item $\Gamma\vdash e:ck$, read as ``in the context $\Gamma$,
  expression $e$ has clock $ck$.''
\item $\Gamma\vdash f$, read as ``in the context $\Gamma$,
  program fragment $f$ is {\em well-formed},'' where $f$ can be an
  expression, list of identifiers, equation, set of equations, or node.
\end{itemize}
The construction of $\Gamma$ typically starts with a {\em base
  context} where each program variable is assigned a distinct clock
variable.
%
Then, the clock constraints of the program are progressively
considered, resulting in changes to the context through a process of
{\em unification}, formally presented in \secref{substitutions}. For
instance, assume that, in the current context $\Gamma$, the types of
variables $x$ and $y$ are respectively ``$\on{\var{u}}{c}$'' and
``$\on{\on{\var{v}}{b}}{c}$'', and assume that $x$ and $y$ are used in
the expression ``$x+y$'', so that their clocks must be equal. To
enforce this equality, the unification process will determine that
clock variable $\var{u}$ can be {\em substituted} everywhere with
``$\on{\var{v}}{b}$''. Applying this substitution to $\Gamma$
produces a new context $\Gamma^\prime$ where the clocks of both $x$
and $y$ have become ``$\on{\on{\var{v}}{b}}{c}$'', and $\var{u}$ is no
longer used.

A unification failure means that the constraint system does not have a
solution, so the program must be rejected. For instance,
clocks $\var{u}$ and ``$\on{\var{u}}{c}$'' cannot be unified, as this
would impose a constraint on the {\em value} of $c$.

\subsubsection{Limitations (motivating example)\label{sec:motivation}}
We exemplify the limitations of \cclus\ with the dataflow program of
\figref{example1train}, which provides a joint inference and training
implementation of the GMoE layer of \figref{gmoe}. It provides a
simple version of \verb|expert1|, consisting of a single parameter and
one pure function. Notice that the {\em source-to-source
  transformation} of \figref{gmoe} into \figref{example1train}
maintains the forward pass code (in black) unchanged. At each
execution cycle, the choice between pure inference and training
is determined by the Boolean input \verb|train| with clock
``\Tick''. When true, backpropagation is performed: for every forward
(inference) value \verb|v|, an {\em error} value \verb|dv| is computed
in reverse order w.r.t.\ the forward pass. This is reflected at the
level of node interfaces: the input \verb|x| of node \verb|gmoe| has
the backpropagated counterpart output \verb|dx|, whose clock is
``\verb|. on train|''. Parameter update is performed at every training
cycle on a single sample (no batch).

%

The transformation is structural: equations on lines 14--17 are the
backpropagation counterpart of line 13, and the equations on lines
18--19 are the counterpart of the broadcast of \verb|x| towards lines
7, 9, and 11.
This dataflow program provides a unified and compact description of
the inference and training semantics of the model.
\emph{Unfortunately, it is rejected by Lustre's clock calculus}.
Indeed, the instantiation of node \verb|expert1| in \verb|gmoe|
transforms the equation in line 3 into:

{\scriptsize\begin{verbatim}
  dxp = grad_f(x when c1 when (train when c1),
               p when (train when c1), dy1);
\end{verbatim}}

\noindent
In this function call, the first argument has clock
$ck_1$=``{\small$\on{\on{\Tick}{\mathtt{c1}}}{(\when{\mathtt{train}}{\mathtt{c1}})}$}''.
However, the definition of \texttt{dy1} in line 15
first subsamples on $\mathtt{train}$, so the clock of
$\mathtt{dy1}$ is
$ck_2$=``{\small$\on{\on{\Tick}{\mathtt{train}}}{(\when{\mathtt{c1}}{\mathtt{train}})}$}''.
As clocks of arguments to one function call, $ck_1$ and $ck_2$ must be
equal, yet this cannot be the case under \cclus.


\section{Related work\label{sec:related}}

Much of the related ML literature was introduced in the previous sections.
We focus here on the most closely related dataflow languages and clock calculi.

Dataflow reactive programming allows the natural representation of
large classes of ML applications and usage scenarios \cite{hugo}.  In
particular, the large variety of ML programming constructs can be
expressed using the four primitives of Lustre.  Resource allocation
can be managed and optimized, from the specification of target
architectures and non-functional requirements \cite{multiproc-part} to
static scheduling via solvers \cite{10753468} or heuristics
\cite{endochrony, from-simu}. This includes scenarios with complex
data-dependent control, conditional execution, and pipelining.


In the Lustre \cite{lustreRTSS} family of dataflow synchronous
languages \cite{CaP96,from-simu,endochrony}, activation analysis is
typically organized as a type system, with emphasis placed on
explainability and scalability rather than maximal expressiveness
\cite{CoP03,Bal08}.  One of the most expressive extensions, known as \emph{n-synchrony} \cite{Coh06}, allows for
partial desynchronization.
Its ability to concretize and flatten periodic clock
expressions---e.g.\ \texttt{(001) on (01)} is the same as
\texttt{(000001)}---facilitates the sort of commutations motivating
this paper; yet it is restricted to periodic control.

The Signal language family \cite{BENVENISTE1991103} diverges from this
trend: its activation analysis is not based on types but on a specialized
solver \cite{10.1145/207110.207134}. Extensions include an SMT-solver
approach for numeric clock equations \cite{10.1145/1967677.1967688},
and a SAT-based approach \cite{emsoft09} to capture Boolean control as
closely as possible.
With \ccnew\ we maintain the low complexity and explainability of a type system with sufficient expressiveness for ML training, real-time resource allocation \cite{emsoft09} and parallel execution \cite{girault}.



\section{The new clock calculus \ccnew\label{sec:newclocks1}}

To allow a natural and modular representation of ML training and
resource allocation we extend the Lustre clock calculus \cclus\ to
make it capable of determining that changes in the sampling order do
not change the activation condition.  The core calculus is still a
Hindley-Milner polymorphic type system, but we adapt unification to
include a semantic notion of clock equivalence that subsumes the
syntactic clock equality of Lustre. The result is a form of
\emph{equational unification}, or \emph{E-unification} \cite[Section
  3]{Ba01}.

This extension requires non-trivial changes to both clock syntax and
to the underlying type inference engine. While equational unification
is undecidable in the general case \cite[p.487]{Ba01}, our extension
of the clock syntax ensures that type checking and inference remain
tractable, even though they are extended with a {\em conversion rule}
that allows switching any clock with an equivalent one.


\subsection{Type system}
The new clock calculus must establish that expressions
$e_1=\text{``\when{x}{\when{a}{(\when{b}{a})}}''}$ and
$e_2=\text{``\when{x}{\when{b}{(\when{a}{b})}}''}$ have the same clock
whenever variables $x$, $a$, and $b$ have the same clock.
To avoid resorting to aliasing analysis, we
extend the syntax of subsampling expressions (non-terminal $\nt{sse}$
of \figref{lustre}) to allow the use of the $\code{when}$ operator.
The new definition of non-terminal $\nt{sse}$ is:
\[
\textcolor{red}{\nt{sse} \gramdef \nt{id} \mid \cnot{(\nt{sse})} \mid \when{\nt{sse}}{\nt{sse}}}
\]
If ``\Tick'' is the clock of $x$, $a$, and $b$, then 
$ck_1=$``\on{\Tick}{\on{a}{(\when{b}{a})}}'' is a clock of $e_1$.

The clocks generated by this grammar are equipped with the equivalence
relation $\sim$, defined in \figref{clock-rules}. This relation
captures the intuition that two clocks $ck_1$ and $ck_2$ are
equivalent if they have the same base clock
($\base{ck_1}=\base{ck_2}$) and are subsampled on the same literals,
regardless of their order ($\subs{ck_1}=\subs{ck_2}$).
For instance, if $ck_2=$``\on{\Tick}{\on{b}{(\when{a}{b})}}''
then $ck_2$ is a clock of $e_2$, $\subs{ck_1}=\subs{ck_2}=\{a,b\}$, 
and $ck_1\sim ck_2$.



In every context $\Gamma$, a new equivalence relation $\equiv_\Gamma$
is defined as the restriction of $\sim$ to well-formed
clocks. Whenever confusion is not possible, as in
\figref{clock-rules}, we abbreviate its notation to $\equiv$.
We only manipulate contexts where the clock assignments are mutually
consistent:
\begin{definition}
	A context $\Gamma$ is \emph{consistent} if for all $(x, ck)\in
	\Gamma$ we have $\Gamma\vdash ck$.
\end{definition}
\noindent This definition relies on that of well-formedness, which is
mutually recursive with that of subsampling expressions.  The
intuition is simple: {\em there exists a hierarchy among clocks and
  variables of a node, given by the prefix order on clocks}. When a
variable $v$ is used in the definition of a clock $ck$, the clock of
$v$ must be a prefix of $ck$. For instance, the clock
$ck=$``\on{\var{a}}{(\when{x}{y})}'' cannot be well-formed, regardless
of the context, because its only prefix is \var{a}, and placing
variables $x$ and $y$ on clock \var{a} places subsampling expression
``\when{x}{y}'' on clock ``\on{\var{a}}{y}'', so the left and right
arguments of \code{on} in $ck$ have different clocks.

\ifextended
Under a consistent $\Gamma$, we can verify that typing judgments
always produce well-formed clocks. This property also justifies the
structure of the rule \textsc{(On)}.
\begin{ext-lemma}[Well-formed soundness]\label{lemma:typing-implies-wf}
	If $\Gamma$ is a consistent context and $\Gamma\vdash e:ck$,
	then $\Gamma\vdash ck$.
\end{ext-lemma}
\begin{proof}
	Immediate consequence of \lemmaref{inversion}, as it implies
	$\Gamma\vdash ck\equiv ck'$ for some $ck'$, which, by definition
	of $\equiv$, implies $\Gamma\vdash ck$.
\end{proof}

Whenever the last (bottom) rule in a deduction tree is
(\textsc{Conv}), we can remove it to build an equivalent clocking
judgment.
\begin{ext-lemma}[Inversion]\label{lemma:inversion}
	If $\Gamma$ is consistent and $\Gamma\vdash e:ck$, then
	there exists $ck'$ such that $\Gamma\vdash ck\equiv ck'$,
	$\Gamma\vdash e:ck'$, and the last rule in the derivation of
	$\Gamma\vdash e:ck'$ is not (\textsc{Conv}).
\end{ext-lemma}
\begin{proof}
	Suppose $\Gamma$ is a consistent context and we have $\Gamma\vdash e:ck$. We prove the result by structural induction on $e$, each case being shown by induction on the derivation.
	
	For the base case, $e$ is either an identifier or a constant. Suppose it is an identifier $x$ (the constant case works in a similar fashion). Then either $\Gamma\vdash x:ck$ is shown directly by the rule (\textsc{Ident}), in which case $ck$ is the clock of $x$ in the consistent context $\Gamma$, so $\Gamma\vdash ck$, and so by reflexivity of $\equiv$ we have that $ck$ is the witness we are looking for. Or $\Gamma\vdash x:ck$ ends its derivation with the rule (\textsc{Conv}), in which case the witness $ck'$ is given by the premisses of the rule. We can then use the induction hypothesis on $\Gamma\vdash x:ck'$ and conclude by transitivity of $\equiv$.
	
	For the inductive step, the proofs are
	similar, and we consider the case of $e=\when{e'}{sse}$ as
	representative.
	We need to prove that if $\Gamma\vdash \when{e}{sse}:ck$, then there is $ck'$ such that $\Gamma\vdash ck\equiv ck'$ and $\Gamma\vdash
	\when{e}{sse}:\on{ck'}{sse}$. We proceed by induction on the
	derivation of $\Gamma\vdash e:ck$. If the last rule used is
	\textsc{(When)}, then $ck$ is directly of the form $\on{ck'}{sse}$ and
	$\Gamma\vdash sse:ck'$. By \textsc{(On)}, the latter implies that
	$\Gamma\vdash ck'$, and thus by reflexivity of $\equiv$ we can
	conclude. Otherwise the last rule used is \textsc{(Conv)}, and so
	there is $ck'$ such that $\Gamma\vdash ck\equiv ck'$ and
	$\Gamma\vdash e:ck'$.  By induction hypothesis on the latter, there
	exists $ck''$ such that $\Gamma\vdash e:\on{ck''}{sse}$ and
	$\Gamma\vdash ck'\equiv \on{ck''}{sse}$. We conclude by transitivity of $\equiv$.
\end{proof}
\fi

\ifextended
As a corollary, we obtain that typing is unique up to equivalence,
showing that the (\textsc{Conv}) rule does not introduce
inconsistencies into the type system.
\else
Under these definitions, typing is unique up to equivalence:
\fi

\begin{theorem}[Uniqueness of typing]
	If $\Gamma$ is consistent, $\Gamma\vdash e:ck_1$, and $\Gamma\vdash e:ck_2$,
	then $\Gamma\vdash ck_1\equiv ck_2$.
\end{theorem}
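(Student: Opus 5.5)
The plan is to reduce the statement to the syntax-directed fragment of the type system by means of the Inversion Lemma (\lemmaref{inversion}), and then argue by structural induction on $e$. Given $\Gamma\vdash e:ck_1$ and $\Gamma\vdash e:ck_2$ with $\Gamma$ consistent, \lemmaref{inversion} provides $ck_1'$ and $ck_2'$ such that $\Gamma\vdash ck_1\equiv ck_1'$ and $\Gamma\vdash ck_2\equiv ck_2'$. It also guarantees that $\Gamma\vdash e:ck_i'$ is derived by a derivation whose last rule is not (\textsc{Conv}). Since $\equiv_\Gamma$ is an equivalence relation (the restriction of $\sim$ to well-formed clocks), it then suffices to prove $\Gamma\vdash ck_1'\equiv ck_2'$ and conclude by symmetry and transitivity.

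I would strengthen the claim to the following statement, proved by structural induction on $e$: if both judgments hold, then the clocks are equivalent. At each step, inversion removes the top-level (\textsc{Conv}), so the last rule of each derivation is the unique syntax-directed rule for the head constructor of $e$.

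\begin{itemize}
\item For an identifier or a constant, rule (\textsc{Ident}) (resp.\ the constant rule) forces $ck_1'=ck_2'=\Gamma(x)$. Consistency gives $\Gamma\vdash \Gamma(x)$, and reflexivity of $\equiv$ concludes.
\item For a function call or operator application, the rule forces all arguments to share the clock of the result. The induction hypothesis on one argument $e_j$ gives $ck_1'\equiv ck_2'$.
\item For $\fby{k}{e'}$, the result clock equals the clock of $e'$, and the induction hypothesis applies directly.
\item For $\when{e'}{sse}$, (\textsc{When}) gives $ck_i'=\on{ck_i''}{sse}$ with $\Gamma\vdash sse:ck_i''$ and $\Gamma\vdash e':ck_i''$. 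The induction hypothesis on $e'$ yields $ck_1''\equiv ck_2''$.
\item For $\merge{sse}{e_t}{e_f}$, the result clock is the clock of $sse$, and an analogous argument applies. Here I would need uniqueness for the typing of subsampling expressions as well, so the induction should be carried out simultaneously over $e$ and $sse$.
\end{itemize}

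The main obstacle is the $\code{when}$ case. One must show that $\sim$ is a congruence for $\code{on}$: if $ck_1''\equiv ck_2''$, then $\on{ck_1''}{sse}\equiv\on{ck_2''}{sse}$. With the characterisation of $\sim$ by $\base{\cdot}$ and $\subs{\cdot}$, this is immediate, since $\base{\on{ck}{sse}}=\base{ck}$ and $\subs{\on{ck}{sse}}=\subs{ck}\cup\{\text{literals of } sse\}$. Both sides are therefore equal. Well-formedness of both sides holds by \lemmaref{typing-implies-wf}, so the equivalence holds in $\equiv_\Gamma$ and not merely in $\sim$.

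If the typing of $sse$ itself involves (\textsc{Conv}) or the rule (\textsc{On}), I would first establish the analogous inversion for $sse$, which follows the same pattern as \lemmaref{inversion}. I would then run the same simultaneous induction.
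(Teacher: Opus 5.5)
Your proposal is correct and matches the paper's proof: structural induction on $e$, with \lemmaref{inversion} removing a final (\textsc{Conv}), the syntax-directed rule and induction hypothesis handling each constructor, and transitivity and symmetry of $\equiv$ closing the argument. Two of your additions make explicit what the paper's one-line proof leaves implicit: the congruence check for $\code{on}$ in the $\code{when}$ case (where only the single literal $\ebase{sse}$ is added to $\subs{\cdot}$, not all literals of $sse$), and the observation that the $\code{merge}$ case needs uniqueness for subsampling expressions, hence a simultaneous induction over $e$ and $sse$.
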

\ifextended
\begin{proof}
	Structural induction over $e$.  For the inductive step, we use
	\lemmaref{inversion}, the induction hypothesis and the transitivity
	of $\equiv$.  For identifiers and constants, the context provides
	the common clock.
\end{proof}
\fi

\newlength{\myboxwidth}
\setlength{\myboxwidth}{\dimexpr\textwidth-2\fboxsep-2\fboxrule\relax}

\begin{figure*}[!t]
	\noindent\fbox{
		\begin{minipage}{\myboxwidth}
			\small
			\vspace{0.5em}
			\begin{center}\textbf{Expressions}\end{center}
			\begin{mathpar}
				\inferrule{(k, ck)\in \Gamma}{\Gamma\vdash k : ck}
				\ (\textsc{Const})
				
				\inferrule{(x, ck)\in \Gamma}{\Gamma\vdash x:ck}
				\ (\textsc{Ident})
				
				\inferrule{\Gamma\vdash e_1:ck \\ \ldots \\ \Gamma\vdash e_n:ck}{\Gamma\vdash \nt{op}(e_1, \ldots, e_n):ck}
				\ (\textsc{Op})
				
				\inferrule{\Gamma\vdash k:ck \\ \Gamma\vdash e:ck}{\Gamma\vdash \fby{k}{e} : ck}
				\ (\textsc{Fby})
				
				\inferrule{\Gamma\vdash e:ck \\ \Gamma\vdash sse:ck}{\Gamma\vdash \when{e}{sse} : \on{ck}{sse}}
				\ (\textsc{When})
				
				\inferrule{
					\Gamma\vdash sse:ck \\ 
					\Gamma\vdash e_1:\on{ck}{sse} \\
					\Gamma\vdash e_2:\on{ck}{(\cnot{sse})}
				}{
					\Gamma\vdash\merge{sse}{e_1}{e_2}:ck
				}\ (\textsc{Merge})
				
			\end{mathpar}
			
			\vspace*{-3mm}
			\begin{center}\textbf{Equations and nodes}\end{center}
			\begin{mathpar}
				\inferrule{
					\Gamma\vdash x:ck \\ \Gamma\vdash e:ck
				}{
					\Gamma\vdash \ceq{x[\annot ck]}{e}
				}\ (\textsc{Eq})
				
				\inferrule{
					\Gamma\vdash eq_1 \\ \ldots \\ \Gamma\vdash eq_n
				}{
					\Gamma\vdash eq_1 \ldots eq_n
				}\ (\textsc{Eqs})
				
				\inferrule{
					\Gamma\vdash x_1:ck_1 \\ \ldots \\ \Gamma\vdash x_n:ck_n
				}{
					\Gamma\vdash x_1[\annot ck_1],\ldots,x_n[\annot ck_n]
				}\ (\textsc{List})
				
				\inferrule{
					\Gamma\vdash \ell_{in} \\ \Gamma\vdash \ell_{out} \\ \Gamma\vdash eqs
				}{
					\Gamma\vdash \node{\nt{id}}{\ell_{in}}{\ell_{out}}\ eqs
				}\ (\textsc{Node})
				
				\inferrule{
					\Gamma\vdash node \\ \Gamma\vdash prog
				}{
					\Gamma\vdash node\ prog
				}\ (\textsc{Prog})
			\end{mathpar}
			
			\vspace*{-3mm}
			\begin{center}\textbf{Well-formedness and \color{blue}equivalence}\end{center}
			\begin{mathpar}
                                \color{blue}
				\inferrule{
					\Gamma\vdash e:ck \\ \Gamma\vdash ck\equiv ck'
				}{
					\Gamma\vdash e: ck'
				}\ (\textsc{Conv})

                                \color{black}
				\inferrule{}{
					\Gamma\vdash \var{x}
				}\ (\textsc{Var})

				\inferrule{
					\Gamma\vdash sse:ck
				}{
					\Gamma\vdash \on{ck}{sse}
				}\ (\textsc{On})
                                
				\inferrule{}{
					\Gamma\vdash\Tick
				}\ (\textsc{Tick})

                                \color{blue}
				\inferrule{
					\base{ck}=\base{ck'} \\ \subs{ck} = \subs{ck'}
				}{
					ck\sim ck'
				}\ (\textsc{Equiv})
				
				\inferrule{
					\Gamma\vdash ck \\ \Gamma\vdash ck' \\ ck\sim ck'
				}{
					\Gamma\vdash ck\equiv ck'
				}\ (\textsc{Equiv-$\Gamma$})

			\end{mathpar}
			
			\vspace*{-3mm}
			\begin{center}\color{blue}\textbf{Helper functions}\end{center}
			\vspace{-8mm}
			\begin{center}
				\begin{minipage}[t]{0.43\textwidth}
					\raggedright\color{blue}
					\begin{align*}
						&\base{\Tick} = \Tick \;\;\;\;\;\;\;\;\;\base{\var{x}} = \var{x}\\
						&\base{\on{ck}{sse}} = \base{ck} \\
						\\
						&\subs{\Tick} = \varnothing \;\;\;\;\;\;\;\;\;\subs{\var{x}} = \varnothing  \\
						&\subs{\on{ck}{sse}} = \subs{ck}\cup\{\ebase{sse}\}
					\end{align*}
				\end{minipage}
				\begin{minipage}[t]{0.52\textwidth}
					\color{blue}\begin{align*}
						& \ebase{x} = x \\
						& \ebase{\when{sse_1}{sse_2}} = \ebase{sse_1} \\
						& \ebase{\cnot{(sse)}} = {\begin{cases}
								x & \text{if }\ebase{sse}=\cnot{x}\\
								\cnot{x} & \text{if }\ebase{sse}=x
						\end{cases}}
					\end{align*}
				\end{minipage}
			\end{center}
		\end{minipage}
	}
	\centering
	\caption{Clock inference rules of \cclus\ (in black) and \ccnew\ extensions (in blue)}
	\label{fig:clock-rules}
        \vspace*{-5mm}
\end{figure*}

\subsection{Substitutions\label{sec:substitutions}}
The fundamental tool of a Hindley-Milner type system is the {\em
  substitution}:
\begin{definition}[Substitution]
	A \emph{substitution} is a function from clock
        variables to clocks. Every substitution $\sigma$ is naturally
        extended to all clocks by setting $\sigma(\Tick) = \Tick$ and
	$\sigma(\on{ck}{sse}) = \on{\sigma(ck)}{sse}$.
\end{definition}
\noindent The domain of a substitution $\sigma$, denoted $dom(\sigma)$, is the
set of clock variables $\var{v}$ such that
$\sigma(\var{v})\not=\var{v}$. The \emph{identity substitution} with
$dom(\sigma)=\varnothing$ is denoted $\id$.
\ifextended
For simplicity, we restrict the domain of our substitutions to the variables present in the context. This is without loss of generality.
\fi

During clock inference, enforcing the constraint that $ck_1$ and
$ck_2$ are equal (formally denoted $ck_1\doteq ck_2$) amounts to
finding a substitution $\sigma$ such that $\sigma(ck_1)$ and
$\sigma(ck_2)$ are equivalent. For instance, the substitution
$\sigma=[\var{b}\mapsto\on{\var{a}}{x}]$ solves the constraint
$\on{\var{a}}{x} \doteq \var{b}$. A substitution $\sigma$ induces a
context transformation
$\Gamma\mapsto\sigma(\Gamma)=\{(x,\sigma(ck))\mid (x,ck)\in\Gamma\}$.
As functions on contexts, substitutions can be composed.

\ifextended
\begin{ext-definition}[Unifier, Unifiable clocks]
	Let $\Gamma$ be a consistent context and $C=\{ck_1\doteq
	ck^\prime_1,\ldots,ck_n\doteq ck^\prime_n\}$ a set of clock
	constraints. A substitution $\sigma$ is a \emph{unifier} for $C$ under
	$\Gamma$ if $\sigma(\Gamma)$ is consistent and
	$\forall i:\sigma(\Gamma)\vdash \sigma(ck_i)\equiv
	\sigma(ck^\prime_i)$.
	
	If a unifier exists for $C=\{ck_1\doteq ck_2\}$ then we say that
	$ck_1$ and $ck_2$ are \emph{unifiable}.
\end{ext-definition}
As context transformations, substitutions can be composed, using
the classical notation $\sigma \circ \tau$. Composition is associative
with $id$ as neutral element.
\fi

\begin{definition}[Valid substitution]
	A substitution $\sigma$ is \emph{valid} in a context $\Gamma$ if
	$\sigma(\Gamma)\vdash \sigma(\var{x})$ for all $\var{x}\in dom(\sigma)$.
\end{definition}
\begin{prop}[Consistency preservation]\label{prop:consistency-preservation}
	Let $\Gamma$ be a consistent context.
	A substitution $\sigma$ is valid in context $\Gamma$ if and only if $\sigma(\Gamma)$ is consistent.
        Furthermore, when this happens, if $\Gamma\vdash e:ck$ then $\sigma(\Gamma)\vdash e:\sigma(ck)$,
        and if $\Gamma\vdash ck\equiv ck'$ then $\sigma(\Gamma)\vdash \sigma(ck)\equiv\sigma(ck')$.
\end{prop}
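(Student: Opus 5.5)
The plan is to first establish a single substitution lemma and derive everything else from it.

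\emph{Key lemma.} If $\Gamma\vdash ck$ and $\sigma$ is valid in $\Gamma$, then $\sigma(\Gamma)\vdash\sigma(ck)$.

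This lemma must be proved together with its companion for subsampling expressions: if $\Gamma\vdash sse:ck$ then $\sigma(\Gamma)\vdash sse:\sigma(ck)$. The two judgments are mutually recursive through (\textsc{On}), so the induction is on the combined derivation.

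\begin{itemize}
\item For (\textsc{Tick}) the clock $\Tick$ is unchanged by $\sigma$.
\item For (\textsc{Var}) with $\var{x}\notin dom(\sigma)$, we have $\sigma(\var{x})=\var{x}$ and the rule applies again. If $\var{x}\in dom(\sigma)$, the conclusion is exactly the validity hypothesis.
\item For (\textsc{On}), we use $\sigma(\on{ck}{sse})=\on{\sigma(ck)}{sse}$ and apply the induction hypothesis to the premise $\Gamma\vdash sse:ck$.
\item For subsampling expressions, the rules (\textsc{Ident}), (\textsc{When}) and \code{not} rewrite directly, because $(x,ck)\in\Gamma$ gives $(x,\sigma(ck))\in\sigma(\Gamma)$.
\end{itemize}

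The (\textsc{Conv}) case requires the equivalence part at the same time. Suppose $ck\sim ck'$ and both are well-formed under $\Gamma$. Then $\base{\sigma(ck)}=\sigma(\base{ck})$, since $\sigma$ only rewrites the base. Also $\subs{\sigma(ck)}=\subs{\sigma(\base{ck})}\cup\subs{ck}$. Because $\base{ck}=\base{ck'}$, both sides acquire the same additional literals. Hence $\sigma(ck)\sim\sigma(ck')$, and well-formedness of both follows from the induction hypothesis. This proves the third claim of the proposition, and the (\textsc{Conv}) case then goes through.

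\emph{Main obstacle.} I expect this (\textsc{Conv}) step to be the hardest. The induction must be organised so that the equivalence and well-formedness statements are available at the same time. I would set up a single induction on derivation height that covers all three judgment forms.

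\emph{The equivalence.} For the forward direction, take $(x,ck)\in\Gamma$. Consistency of $\Gamma$ gives $\Gamma\vdash ck$, and the key lemma gives $\sigma(\Gamma)\vdash\sigma(ck)$. So $\sigma(\Gamma)$ is consistent.

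For the converse, the idea is to read off well-formedness of $\sigma(\var{x})$ from a clock of $\sigma(\Gamma)$ that contains it. Take $\var{x}\in dom(\sigma)$, and assume (by the paper's convention that domains are restricted to variables in the context) that $\var{x}$ occurs in some clock $ck$ with $(y,ck)\in\Gamma$. Then $\sigma(ck)$ has $\sigma(\var{x})$ as a prefix, since $\var{x}$ can only occur as the base. Consistency gives $\sigma(\Gamma)\vdash\sigma(ck)$. A routine inversion on (\textsc{On}) shows that every prefix of a well-formed clock is well-formed, so $\sigma(\Gamma)\vdash\sigma(\var{x})$.

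\emph{The typing claim.} This follows by induction on the derivation of $\Gamma\vdash e:ck$.
\begin{itemize}
\item (\textsc{Const}) and (\textsc{Ident}) are immediate from the definition of $\sigma(\Gamma)$.
\item (\textsc{Op}) and (\textsc{Fby}) follow directly from the induction hypothesis.
\item (\textsc{When}) and (\textsc{Merge}) use the subsampling-expression lemma together with $\sigma(\on{ck}{sse})=\on{\sigma(ck)}{sse}$. For (\textsc{Merge}) we also need $\sigma(\on{ck}{(\cnot{sse})})=\on{\sigma(ck)}{(\cnot{sse})}$.
\item (\textsc{Conv}) uses the equivalence claim just proved.
\end{itemize}
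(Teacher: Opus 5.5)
Your proposal is correct and follows the paper's route: a mutual-induction substitution lemma for well-formedness and subsampling typing (the paper's Substitution preservation lemma, whose (\textsc{Conv}) case likewise rests on $\sigma$ preserving $\sim$), the forward direction obtained from it, and the converse by reading $\sigma(\var{x})$ off as a prefix of a clock of the consistent context $\sigma(\Gamma)$. Two small fixes are needed. First, $\base{\sigma(ck)}$ equals $\base{\sigma(\base{ck})}$ rather than $\sigma(\base{ck})$; this is harmless, since it still depends only on $\base{ck}=\base{ck'}$. Second, inverting (\textsc{On}) only yields a typing judgment $\sigma(\Gamma)\vdash s:ck_p$ for the prefix $ck_p$, so concluding $\sigma(\Gamma)\vdash ck_p$ uses the easy fact that typing in a consistent context produces well-formed clocks.
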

\ifextended
\begin{proof}
	For the direct implication.
	Let $ck\in ran(\sigma(\Gamma))$.
	By definition, there is $ck'\in ran(\Gamma)$ such that
	$ck=\sigma(ck')$.
	By consistency of $\Gamma$, we have $\Gamma\vdash ck'$, and by
	item (a) of \lemmaref{wf-preservation} (provided next), we
	obtain $\sigma(\Gamma)\vdash \sigma(ck')$, yielding the
	desired result $\sigma(\Gamma)\vdash ck$.
	
	For the converse.
	Since the domain of $\sigma$ is restricted to the variables present in $\Gamma$, there a two possibilities.
	Either $dom(\sigma)=\varnothing$ and so $\sigma=\id$ which is vacuously valid over $\Gamma$.
	Or $dom(\sigma)\neq\varnothing$, in which case, consider any
	$\var{x}\in dom(\sigma)$. Since $\var{x}$ appears in $\Gamma$,
	there is an identifier $x$ such that $(x, ck)\in\Gamma$ where
	$ck=\on{\on{\on{\var{x}}{e_1}}{\ldots}}{e_n}$
	and so, by consistency of $\sigma(\Gamma)$ we have
	$\sigma(\Gamma)\vdash \on{\on{\on{\sigma(\var{x})}{e_1}}{\ldots}}{e_n}$.
	By a straightforward induction on the rule \textsc{(On)}, the
	derivation of this judgment must contain a derivation for
	$\sigma(\Gamma)\vdash \sigma(\var{x})$.
\end{proof}
\fi
\noindent In other words, once a constraint is satisfied, it remains satisfied throughout the inference process.

\ifextended
\begin{ext-lemma}[Substitution preservation]\label{lemma:wf-preservation}\label{lemma:subst-preservation}
	If $\sigma$ is a valid substitution over a consistent context $\Gamma$, then
	\begin{enumerate}
		\item if $\Gamma\vdash ck$ then $\sigma(\Gamma)\vdash\sigma(ck)$
		\item if $\Gamma\vdash sse:ck$ then $\sigma(\Gamma)\vdash sse:\sigma(ck)$
	\end{enumerate}
\end{ext-lemma}
\begin{proof}
	We prove $(1)$ and $(2)$ by a mutual induction on derivation trees.
	
	\textbf{Cases for (1)}.
	Suppose $\Gamma\vdash b$ for $b\in\{\Tick\}\cup\{\var{x}:\var{x}\notin dom(\sigma)\}$ then $\sigma(b)=b$, and by \textsc{(Tick)} or \textsc{(Var)} we have $\sigma(\Gamma)\vdash b$. If $\Gamma\vdash \var{x}$ where $\var{x}\in dom(\sigma)$, then $\sigma(\Gamma)\vdash \sigma(\var{x})$ by validity of $\sigma$.
	
	Suppose $\Gamma\vdash \on{ck}{sse}$. This must derives from $\Gamma\vdash sse:ck$. By induction hypothesis for (b), we obtain that $\sigma(\Gamma)\vdash sse:\sigma(ck)$. Thus, by \textsc{(On)}, we obtain that $\sigma(\Gamma)\vdash \on{\sigma(ck)}{sse}$ which is exactly $\sigma(\Gamma)\vdash\sigma(\on{ck}{sse})$.
	
	\textbf{Cases for (2)}.
	Suppose $\Gamma\vdash x:ck$ such that it derives from $(x, ck)\in\Gamma$. Then $(x, \sigma(ck))\in\sigma(\Gamma)$ by definition, and thus, by applying \textsc{(Ident)}, we obtain $\sigma(\Gamma)\vdash x:\sigma(ck)$.
	
	Suppose $\Gamma\vdash \cnot{sse}:ck$ derives from $\Gamma\vdash sse:ck$. Then, by induction hypothesis we have $\sigma(\Gamma)\vdash sse:\sigma(ck)$, and we can conclude by applying \textsc{(Not)}.
	
	Suppose $\Gamma\vdash\when{sse_1}{sse_2}:ck$ derives from the rule \textsc{(When)}. Then $ck$ must be of the form $\on{ck'}{sse_2}$, and we have $\Gamma\vdash sse_1:ck'$ and $\Gamma\vdash sse_2:ck'$. By aplying induction hypothesis on both we obtain $\sigma(\Gamma)\vdash sse_1:\sigma(ck')$ and $\sigma(\Gamma)\vdash sse_2:\sigma(ck')$. Thus, by applying \textsc{(When)}, we have $\sigma(\Gamma)\vdash\when{sse_1}{sse_2}:\on{\sigma(ck')}{sse_2}$ which is exactly $\sigma(\Gamma)\vdash\when{sse_1}{sse_2}:\sigma(\on{ck'}{sse_2})$.
	
	Finally, suppose $\Gamma\vdash sse:ck$ derives from \textsc{(Conv)}. Then there is $ck'$ such that $\Gamma\vdash sse:ck'$ and $\Gamma\vdash ck\equiv ck'$. Where the latter must derive from the \textsc{(Equiv)} rule, and so we have $\Gamma\vdash ck$, $\Gamma\vdash ck'$ and $ck\sim ck'$. By induction hypothesis for (a) on these shorter derivations we obtain $\sigma(\Gamma)\vdash sse:\sigma(ck')$, $\sigma(\Gamma)\vdash \sigma(ck)$ and $\sigma(\Gamma)\vdash \sigma(ck')$. Moreover since $ck\sim ck'$ we have $\sigma(ck)\sim \sigma(ck')$, and thus, by applying \textsc{(Equiv)} we have $\sigma(\Gamma)\vdash \sigma(ck)\equiv\sigma(ck')$. Finally, we apply \textsc{(Conv)} to obtain $\sigma(\Gamma)\vdash sse:\sigma(ck)$, as desired.
\end{proof}

The last case of the proof states that clock equivalence is stable under valid substitutions. This ensures that once a constraint is satisfied, it remains satisfied throughout the inference process.
Formally, if $\Gamma$ is a consistent context and $\sigma$ is valid over $\Gamma$, then $\sigma$ preserves equivalence, i.e. if $\Gamma\vdash ck\equiv ck'$ then $\sigma(\Gamma)\vdash \sigma(ck)\equiv\sigma(ck')$.

Another consequence of the lemma is that valid substitutions can be composed.
\begin{ext-lemma}\label{lemma:valid-subst-composition}
	Let $\Gamma$ be a consistent context. If $\sigma$ is valid under $\Gamma$ and $\tau$ is valid under $\sigma(\Gamma)$,
	then $\tau\circ\sigma$ is valid under $\Gamma$.
\end{ext-lemma}
\begin{proof}
	Let $\var{x}\in dom(\tau\circ\sigma)$. We show that $\tau\circ\sigma(\Gamma)\vdash \tau\circ\sigma(\var{x})$. There are two cases.
	If $\var{x}\in dom(\sigma)$ then $\sigma(\Gamma)\vdash \sigma(\var{x})$ by validity. By \lemmaref{subst-preservation}, since $\tau$ is valid under $\sigma(\Gamma)$ we obtain the desired result.
	If $\var{x}\notin dom(\sigma)$, then $\sigma(\var{x})=\var{x}$ and so $\tau\circ\sigma(\var{x})=\tau(\var{x})$.
	By \textsc{(Var)} we have that $\sigma(\Gamma)\vdash \var{x}$, and so by \lemmaref{subst-preservation}, since $\tau$ is valid under $\sigma(\Gamma)$, we obtain $\tau\circ\sigma(\Gamma)\vdash \tau(\var{x})$.
\end{proof}
\fi

\begin{definition}[Equivalence of substitutions]
	We say that two substitutions $\sigma$ and $\tau$ are equivalent under
	context $\Gamma$, denoted $\Gamma\vdash \sigma\equiv\tau$, if
	both $\sigma$ and $\tau$ are valid under $\Gamma$, $dom(\sigma) = dom(\tau)$,
        and $\forall \var{x}\in dom(\sigma), \sigma(\var{x})\sim \tau(\var{x})$.
\end{definition}
\begin{definition}[Substitution preorder $\lesssim$]
	Substitution $\sigma$ is \emph{more general} than $\tau$ under
        $\Gamma$, denoted $\Gamma\vdash\sigma\lesssim\tau$, if there
        exists a substitution $\gamma$, valid under $\sigma(\Gamma)$,
        such that $\Gamma\vdash \tau\equiv\gamma\circ\sigma$.
\end{definition}
\begin{definition}[Most general unifier]
  Given a consistent context $\Gamma$ and a set of clock constraints
  $C=\{ck_1\doteq ck_1^\prime,\ldots,ck_n\doteq ck_n^\prime\}$, we say that
  substitution $\sigma$ is the most general unifier of $C$ under $\Gamma$
  if $\forall i:\sigma(\Gamma)\vdash \sigma(ck_i)\equiv\sigma(ck^\prime_i)$
  and if $\sigma$ is minimal with this property under $\lesssim$.
\end{definition}


\subsection{Clock inference algorithm}\label{sec:unification}
Clock inference is realized using a variant of Algorithm~W
\cite{DamasMilner82}. All expressions, equations, clocks and clock
annotations of the program are traversed exactly once, in
sequence.
Starting from the base context $\Gamma_0$, each step $n\geq 1$
of the traversal
incrementally builds a new valid substitution $\tau_n$ and a
new context $\Gamma_n=\tau_n(\Gamma_{n-1})$. The context $\Gamma_n$
ensures the satisfaction of all constraints associated with the
expressions, equations, clocks and clock annotations already
traversed.

The traversal respects the hierarchy of the program: a subexpression
is traversed before the expression, equation, or clock containing it;
a clock prefix (which is a clock itself) is traversed before the full
clock.
For example, consider the expression
$e=$ ``$\when{x}{(\when{a}{b})}+\when{y}{(\when{b}{a})}$''
which is representative of the problems posed by our motivational
example. Its subexpressions are $e_1=$``$\when{a}{b}$'', $e_2=$``$\when{b}{a}$'',
$e_3$ the left operand of +, and $e_4$ its right operand. A valid
traversal order is $e_1, e_2, e_3, e_4, e$, and the initial context
is
$\Gamma_0=\{(a,\var{a}),(b,\var{b}),(x,\var{x}),(y,\var{y})\}$.  The
clock constraint associated with $e_1$ is that the clocks of $a$ and
$b$ are equal, i.e. $\var{a}\doteq\var{b}$. The most general unifier for
$\var{a}\doteq\var{b}$ under $\Gamma_0$ is
$\tau_1=[\var{b}\mapsto\var{a}]$, resulting in
$\Gamma_1=\{(a,\var{a}),(b,\var{a}),(x,\var{x}),(y,\var{y})\}$.  Since
$e_2$ has the same clock constraint as $e_1$, we have $\tau_2=\id$ and $\Gamma_2=\Gamma_1$.  As
$\Gamma_2\vdash e_1:\on{\var{a}}{b}$, the clock constraint associated
with $e_3$ is $\var{x}\doteq\on{\var{a}}{b}$. The most general unifier
is $\tau_3=[\var{x}\mapsto\on{\var{a}}{b}]$, and so
$\Gamma_3=\{(a,\var{a}),(b,\var{a}),(x,\on{\var{a}}{b}),(y,\var{y})\}$.
Similarly, $\tau_4=[\var{y}\mapsto\on{\var{a}}{a}]$, and
$\Gamma_4=\{(a,\var{a}),(b,\var{a}),(x,\on{\var{a}}{b}),(y,\on{\var{a}}{a})\}$.
Upon traversal of $e$ we have
$\Gamma_4\vdash e_3:\on{\on{\var{a}}{b}}{(\when{a}{b})}$ and
$\Gamma_4\vdash e_4:\on{\on{\var{a}}{a}}{(\when{b}{a})}$. Since these
two clocks are equivalent under context $\Gamma_4$, the constraint is satisfied,
and so $\tau_5=\id$ and $\Gamma_5=\Gamma_4$.

The hierarchical traversal greatly simplifies clock inference. When
building the most general unifier $\tau_i$ for the clock constraint
$ck_i\doteq ck_i^\prime$ we only have to consider 4 cases:
\begin{itemize}
	\item If $ck_i\sim ck_i^\prime$, then nothing needs to be done, i.e. $\tau_i=\id$.
	\item If $ck_i=\var{x}$ and $\base{ck_i^\prime}\neq \var{x}$, then $\tau_i=[\var{x}\mapsto ck_i^\prime]$.
	\item If $ck_i^\prime=\var{x}$ and $\base{ck_i}\neq \var{x}$, then $\tau_i=[\var{x}\mapsto ck_i]$.
	\item All other cases are typing errors, as no solution exists.
\end{itemize}

\ifextended
\subsection*{Technical preliminaries}
To tackle the upcoming proofs of correctness we need
to establish a series of technical results on clocks, derivations and consistent contexts.
In this section, for a literal $\ell$, we denote by $|\ell|$ its underlying identifiers, i.e. $|\ell| = c$ if $\ell=c$ or $\ell=\cnot{c}$, for some identifier $c$.
We start with results on clocks and their subsamplings.

\begin{ext-lemma}\label{lemma:prefix-wf}
	Let $\Gamma$ be a consistent context. If $\Gamma\vdash e : \on{\on{\on{\beta}{s_1}}{\ldots}}{s_k}$, where $\beta$ is either $\Tick$ or a clock variable and $k\geq 0$ some integer, then for all $i\in\{1, \ldots, k\}$ we have $\Gamma\vdash s_i:\on{\on{\on{\beta}{s_1}}{\ldots}}{s_{i-1}}$.
\end{ext-lemma}
\begin{proof}
	Since $\Gamma$ is consistent, we can use \lemmaref{typing-implies-wf} to obtain $\Gamma\vdash \on{\on{\on{\beta}{s_1}}{\ldots}}{s_k}$.
	That derivation can only be obtained by the \textsc{(On)} rule, thus the premise $\Gamma\vdash s_k:\on{\on{\on{\beta}{s_1}}{\ldots}}{s_{k-1}}$ must hold. We can then repeat the same procedure for all $i\leq k$.
\end{proof}

\begin{ext-lemma}[Clock of subsampling]\label{lemma:ck-sub}
	Let $\Gamma$ be a consistent context. If $\Gamma\vdash sse:ck$,
	then, for $c=|ebase(sse)|$, we have $base(\Gamma(c))=base(ck)$ and
	$subs(\Gamma(c))\subseteq subs(ck)$.
\end{ext-lemma}
\begin{proof}
	By structural induction over $sse$. In each case, we apply
	\lemmaref{inversion} to recover the structure of $ck$ and conclude
	via the induction hypothesis.
\end{proof}

From here, we can formally deduce that a clock cannot be self referential.
Given a context $\Gamma$, we denote with $dom(\Gamma)$ the
set of identifiers and constants it associates clocks to. For $v\in
dom(\Gamma)$ we denote $\Gamma(v)$ the clock associated by $\Gamma$ to
$v$.
\begin{ext-lemma}[No self reference]\label{lemma:no-self-ref}
	If $\Gamma$ is a consistent context, then for every identifier $v\in dom(\Gamma)$ we have that $subs(\Gamma(v))$ contains no literal over $v$, i.e. $\forall \ell\in subs(\Gamma(v)), |\ell|\neq v$.
\end{ext-lemma}
\begin{proof}
	We write $\Gamma(v)=\on{\on{\on{ck}{s_1}}{\ldots}}{s_k}$. By contradiction, suppose there is a minimal index $j$ such that $|ebase(s_j)|=v$. Since $\Gamma$ is consistent, by \lemmaref{prefix-wf}, we have that $\Gamma\vdash s_j:\on{\on{\on{ck}{s_1}}{\ldots}}{s_{j-1}}$.
	Hence, by \lemmaref{ck-sub}, we have that
	\begin{align*}
		subs(\Gamma(v)) &\subseteq subs(\on{\on{\on{ck}{s_1}}{\ldots}}{s_{j-1}})\\
		&=\{ebase(s_1), \ldots, ebase(s_{j-1})\}
	\end{align*}
	A contradiction since the left-hand side contains $v$ as underlying literal of $ebase(s_j)$, but the right-hand side does not, by minimality of $j$.
\end{proof}


To explore and manipulate the structure of judgment derivations, we define a notion that captures
the idea of the rule \textsc{(On)}: that a well-formed clock can only be built from smaller well-formed clocks.

\begin{ext-definition}[Rank]
	The \emph{rank} of a clock $ck$ is defined as $rank(ck) = card(subs(ck))$.
\end{ext-definition}

Note that $rank(ck)=0$ if and only if $ck=\Tick$ or $ck=\var{x}$ for some $\var{x}$. And $rank$ is invariant under $\sim$, i.e. if $ck_1\sim ck_2$ then $rank(ck_1)= rank(ck_2)$.

\begin{ext-lemma}[Derivation structure]\label{lemma:derivation-structure}
	Let $\Gamma$ be a consistent context.
	\begin{enumerate}
		\item[(a)] If $\Gamma\vdash ck$. Either $rank(ck)=0$, in which case the derivation is a single rule \textsc{(Tick)} or \textsc{(Var)}. Or every \textsc{(Ident)} leaf whose premise is $(c, \Gamma(c))$ satisfies
		\begin{itemize}
			\item $\exists d\in subs(ck), c=|d|$
			\item $base(\Gamma(c)) = base(ck)$
			\item $rank(\Gamma(c)) < rank(ck)$
		\end{itemize}
		\item[(b)] If $\Gamma\vdash sse:ck$. Then, for every \textsc{(Ident)} leaf whose premise is $(c, \Gamma(c))$ such that $c\neq |ebase(sse)|$, we have
		\begin{itemize}
			\item $\exists d\in subs(ck), c=|d|$
			\item $base(\Gamma(c))=base(ck)$
			\item $rank(\Gamma(c)) < rank(ck)$
		\end{itemize}
	\end{enumerate}
\end{ext-lemma}
\begin{proof}
	First, note that a well-formedness judgment
	$\Gamma\vdash ck$ can only be concluded by \textsc{(Tick)},
	\textsc{(Var)} or \textsc{(On)}.
	The only context-reading rule occurring in these
	derivations is \textsc{(Ident)}, since subsampling expressions contain
	no constants.
	
	We prove (a) and (b) simultaneously, by strong induction on the size
	of the derivation.
	
	\medskip
	\textbf{Part (a).} Let $\mathcal{D}$ be a derivation of
	$\Gamma\vdash ck$.
	
	If $\rank{ck}=0$, then $ck$ is $\Tick$ or some clock variable, so $\mathcal{D}$ must be a single rule \textsc{(Tick)} or \textsc{(Var)}.
	
	If $\rank{ck}\geqslant 1$, then $ck$ is of the form $\on{ck'}{sse}$,
	and the last rule of $\mathcal{D}$ must be \textsc{(On)}, with premise $\Gamma\vdash sse:ck'$. Let $\mathcal{D}'$ be the derivation of that premise. Note that
	$\base{ck}=\base{ck'}$,
	$\subs{ck}=\subs{ck'}\cup\{\ebase{sse}\}$, and $\rank{ck'}\leqslant\rank{ck}$. Every \textsc{(Ident)} leaf in $\mathcal{D}$ is also a leaf in $\mathcal{D}'$. Consider such a leaf with premise $(c,\Gamma(c))$:
	\begin{itemize}
		\item either $c\neq|\ebase{sse}|$, in which case we can conclude using part (b) of the induction hypothesis applied to $\mathcal{D}'$;
		\item or $c=|\ebase{sse}|$, then \lemmaref{ck-sub} applied to the
		judgment $\Gamma\vdash sse:ck'$ yields
		$\base{\Gamma(c)}=\base{ck'}=\base{ck}$ and
		$\subs{\Gamma(c)}\subseteq\subs{ck'}\subseteq\subs{ck}$;
                let $d = ebase(sse)$: by \lemmaref{no-self-ref},
                we have that $d\notin\subs{\Gamma(c)}$;
                since $d\in\subs{ck}$, the inclusion
		$\subs{\Gamma(c)}\subseteq\subs{ck}$ is strict,
                hence $\rank{\Gamma(c)}<\rank{ck}$.
	\end{itemize}
	
	\medskip
	\textbf{Part (b).} Let $\mathcal{D}$ be a derivation of
	$\Gamma\vdash sse:ck$. We consider each case of its last rule.
	
	\emph{Case \textsc{(Ident)}.} Then $sse$ is an identifier $c=|\ebase{sse}|$, which is not subject to the claim.
	
	\emph{Case \textsc{(Not)}.} Then $sse=\cnot{(sse')}$, and its premise is $\Gamma\vdash sse':ck$. We have $|\ebase{sse}|=|\ebase{sse'}|$, so we can conclude directly by induction hypothesis.
	
	\emph{Case \textsc{(When)}.} Then $sse=\when{sse_1}{sse_2}$, with
	premises $\mathcal{D}_1$ deriving $\Gamma\vdash sse_1:ck'$ and
	$\mathcal{D}_2$ deriving $\Gamma\vdash sse_2:ck'$, and
	$ck=\on{ck'}{sse_2}$.
        Hence $\base{ck}=\base{ck'}$, \\
	$\subs{ck}=\subs{ck'}\cup\{\ebase{sse_2}\}$, \\
	$\rank{ck'}\leqslant\rank{ck}$, and $\ebase{sse}=\ebase{sse_1}$. Let
	$(c,\Gamma(c))$ be a leaf of $\mathcal{D}$ with
	$c\neq|\ebase{sse_1}|$.
	\begin{itemize}
		\item If it is a leaf of $\mathcal{D}_1$, then the induction hypothesis on $\mathcal{D}_1$ applies and yields the three statements with respect to $ck'$. By transitivity we can conclude for $ck$.
		\item If it is a leaf of $\mathcal{D}_2$ with $c\neq|\ebase{sse_2}|$, then we conclude by induction hypothesis on $\mathcal{D}_2$ and transitivity.
		\item If it is a leaf of $\mathcal{D}_2$ with $c=|\ebase{sse_2}|$, then let $d=\ebase{sse_2}\in\subs{ck}$. By applying \lemmaref{ck-sub} to
		$\Gamma\vdash sse_2:ck'$ we obtain
		$\base{\Gamma(c)}=\base{ck'}=\base{ck}$ and
		$\subs{\Gamma(c)}\subseteq\subs{ck'}\subseteq\subs{ck}$. By
		\lemmaref{no-self-ref}, $d\notin\subs{\Gamma(c)}$, and since
		$d\in\subs{ck}$ the inclusion is strict and yields $\rank{\Gamma(c)}<\rank{ck}$.
	\end{itemize}
	
	\emph{Case \textsc{(Conv)}.} The premises $\Gamma\vdash sse:ck'$, whose derivation is denoted $\mathcal{D}'$, and
	$\Gamma\vdash ck\equiv ck'$, whose derivation necessarily ends with
	\textsc{(Equiv-$\Gamma$)}, and thus carries premises $\Gamma\vdash ck$ (whose derivation is denoted $\mathcal{D}_1$), $\Gamma\vdash ck'$ (whose derivation is denoted $\mathcal{D}_1$), and $ck\sim ck'$. The leaves of $\mathcal{D}$ are those of
	$\mathcal{D}'$, $\mathcal{D}_1$ and $\mathcal{D}_2$.
	For the leaves of $\mathcal{D}'$ with $c\neq|\ebase{sse}|$,
        we can use the induction
	hypothesis to obtain the results on $ck'$,
        and then conclude by using $ck\sim ck'$.
	For the leaves of $\mathcal{D}_1$ or $\mathcal{D}_2$, we apply part (a)
	of the induction hypothesis to these strictly smaller derivations.
\end{proof}

To show that the substitutions generated during inference are valid
we will prove they are both idempotent and well-formed, as defined below.
\begin{ext-definition}
	A substitution $\sigma$ is \emph{idempotent} if and only if 
	$\sigma\circ\sigma = \sigma$ holds.
\end{ext-definition}
Intuitively, idempotent substitutions are those that apply all their
changes in a single step, meaning a second application has no further
effect. An example of a substitution that is {\em not} idempotent is
$[\var{x}\mapsto \var{y}, \var{y}\mapsto \var{z}]$. Idempotence is
guaranteed during inference by ensuring that the variables appearing
in the clocks of the substitution are not part of its domain.
\begin{ext-definition}
	A substitution $\sigma$ is \emph{well-formed} under a context
	$\Gamma$ if $\Gamma\vdash \sigma(\var{x})$ for all $\var{x}\in
	dom(\sigma)$.
\end{ext-definition}
Notice that $\sigma$ is valid under $\Gamma$ if and only if
$\sigma$ is well-formed under $\sigma(\Gamma)$.

The following lemma establishes that the substitutions generated by
our algorithm are valid.
\begin{ext-lemma}\label{lemma:wf-implies-valid}
	Let $\Gamma$ be a context.
	If $\sigma$ is idempotent and well-formed under $\Gamma$, then $\sigma$ is valid under $\Gamma$.
\end{ext-lemma}
\begin{proof}
	Let $\var{x}\in dom(\sigma)$, since $\sigma$ is well-formed under $\Gamma$, we have $\Gamma\vdash \sigma(\var{x})$.
	By \lemmaref{derivation-structure}, the leaves are all \textsc{(Ident)} whose premise is $(|d|, ck_d)\in \Gamma$
	where $d\in subs(\sigma(\var{x}))$ and $base(ck_d)=base(\sigma(\var{x}))$.
	By idempotency of $\sigma$ we have that $base(\sigma(\var{x}))\notin dom(\sigma)$, thus $\sigma(ck_d)=ck_d$.
	Therefore $(|d|, ck_d)\in \sigma(\Gamma)$, and so the derivation of $\Gamma\vdash \var{x}$ is also a derivation of $\sigma(\Gamma)\vdash \sigma(\var{x})$.
\end{proof}

Finally, we extend our apparatus for manipulating contexts and establish consistency results,
starting with the $\sim$ and $\equiv$ relations.
\begin{ext-definition}[Equivalence of contexts]
	We write $\Gamma\sim\Gamma'$ when $dom(\Gamma)=dom(\Gamma')$ and
	$\forall x\in dom(\Gamma), \Gamma(x)\sim \Gamma'(x)$.
	
	Two contexts $\Gamma$ and $\Gamma'$ are \emph{equivalent}, denoted
	$\Gamma\equiv\Gamma'$, if they are both consistent, and
	$\Gamma\sim\Gamma'$.
\end{ext-definition}

The following lemma relates equivalence of substitutions to equivalence of contexts. Recall that the domain of the substitutions we consider is restricted to the ones appearing in the context.
\begin{ext-lemma}\label{lemma:equiv-subs-consistent-context}
	Let $\Gamma$ be a consistent context.
	We have $\Gamma\vdash\sigma\equiv\tau$ if and only if $\sigma(\Gamma)\equiv\tau(\Gamma)$.
\end{ext-lemma}
\begin{proof}
	For the direct implication. Since $\sigma$ and $\tau$ are both valid under the consistent context $\Gamma$, by \lemmaref{subst-preservation}, we have that $\sigma(\Gamma)$ and $\tau(\Gamma)$ are both consistent. Then establishing $\sigma(\Gamma)\sim\tau(\Gamma)$ is straightforward, by using $\sigma\sim\tau$ on each clock in $\Gamma$.
	
	The reverse implication can be shown by contraposition, i.e. we suppose $\Gamma\nvdash\sigma\equiv\tau$, and show that $\sigma(\Gamma)\not\equiv\tau(\Gamma)$. The hypothesis breaks down into different assumptions.
	
	Either $\sigma$ (or $\tau$) is not valid under $\Gamma$, i.e. there is $\var{x}\in dom(\sigma)$ such that $\sigma(\Gamma)\nvdash\sigma(\var{x})$. So it suffices to show that $\sigma(\var{x})$ is a clock in $\sigma(\Gamma)$ to prove $\sigma(\Gamma)$ is not consistent. Since $\Gamma$ is consistent, and $dom(\sigma)\subseteq dom(\Gamma)$, there must be an identifier or constant whose clock is $\var{x}$. Indeed, $\var{x}\in dom(\Gamma)$ means $\Gamma$ contains a clock $ck$ of the form $\on{\on{\on{\var{x}}{e_1}}{\ldots}}{e_n}$ for some $n$. If $n=0$ we are done. Otherwise, and since $\Gamma\vdash ck$ then the identifier $|e_1|\in\Gamma$ must have clock $\var{x}$, and hence $\sigma(\var{x})$ is a clock that appears in $\sigma(\Gamma)$. 
	
	Or $dom(\sigma)\neq dom(\tau)$, in which case there is (WLOG) a variable $\var{x}\in dom(\sigma)\setminus dom(\tau)$. With the same reasoning as above we have that $\var{x}$ is a clock appearing in $\Gamma$ and so $\sigma(\var{x})\not\sim\tau(\var{x})$ gives us that $\sigma(\Gamma)\not\equiv\tau(\Gamma)$.
	
	Finally if there is $\var{x}\in dom(\sigma)$ such that $\sigma(\var{x})\not\sim\tau(\var{x})$, then we can conclude with the same reasoning as above.
\end{proof}

We can now establish the soundness of the equivalence relation on
contexts: equivalent contexts induce the same clocking and typing judgments.
That is, they are extensionally indistinguishable regarding the
clocking of expressions.

\begin{ext-lemma}[Cross well-formedness]\label{lemma:cross-wf}
	If $\Gamma\equiv\Gamma'$ then for all $v\in dom(\Gamma)$ we have both $\Gamma\vdash\Gamma'(v)$ and $\Gamma'\vdash\Gamma(v)$.
\end{ext-lemma}
\begin{proof}
	By symmetry, it is sufficient to prove $\forall v\in dom(\Gamma), \Gamma'\vdash\Gamma(v)$.
	
	By induction on $n$ we prove that for all $v\in dom(\Gamma)$ such that $rank(\Gamma(v))\leq n$ we have $\Gamma'\vdash \Gamma(v)$.
	
	For $n=0$, $\Gamma(v)$ is either $\Tick$ or some variable, so a single rule \textsc{(Tick)} or \textsc{(Var)} yields the wanted derivation in $\Gamma'$.
	
	For $n\rightarrow n+1$, let $v\in dom(\Gamma)$ be such that $rank(\Gamma(v))=n+1$. By consistency of $\Gamma$ we have $\Gamma\vdash \Gamma(v)$. We fix a derivation $\mathcal{D}$ of that statement. All the leaves in $\mathcal{D}$ are either \textsc{(Tick)}, \textsc{(Var)} or \textsc{(Ident)} rules. By \lemmaref{derivation-structure}, all the \textsc{(Ident)} rules (with premise $(c, \Gamma(c))$) verify that $rank(\Gamma(c)) < rank(\Gamma(v)) = n+1$.
	
	Each such \textsc{(Ident)} rule is replaced by a derivation of $\Gamma'\vdash c:\Gamma(c)$. The latter is obtained by using the \textsc{(Conv)} rule. Its first premise is $\Gamma'\vdash c:\Gamma'(c)$, which holds since $\Gamma'$ has the same domain as $\Gamma$. The other premise is $\Gamma'\vdash \Gamma(c)\equiv \Gamma'(c)$. It relies on three more premises, namely $\Gamma'\vdash \Gamma'(c)$ which holds by consistency of $\Gamma'$, $\Gamma'\vdash \Gamma(c)$ which holds by induction hypothesis since $rank(\Gamma(c))\leq n$, and $\Gamma(c)\sim \Gamma'(c)$ which holds since $\Gamma\equiv\Gamma'$.
	
	Apart from the \textsc{(Ident)} leaves, the rest of the derivation $\mathcal{D}$ also holds in the context $\Gamma'$. Hence we have obtained a derivation of $\Gamma'\vdash \Gamma(v)$.
\end{proof}

\begin{ext-prop}[Equivalence soundness]\label{prop:equiv-soundness}
	If $\Gamma\equiv\Gamma'$, then
	\begin{enumerate}
		\item[(a)] $\Gamma\vdash ck$ if and only if $\Gamma'\vdash ck$
		\item[(b)] $\Gamma\vdash e:ck$ if and only if $\Gamma'\vdash e:ck$
	\end{enumerate}
\end{ext-prop}
\begin{proof}
	It is sufficient to prove only the direct implication of both equivalences, since $\equiv$ is symmetric.
	We prove (a) and (b) simultaneously by induction on the
	derivation. In both cases, we start from a derivation $\mathcal{D}$ of $\Gamma\vdash ck$ (or $\Gamma\vdash e:ck$) and transform it into a derivation $\mathcal{D}'$ of $\Gamma'\vdash ck$ (or $\Gamma'\vdash e:ck$). 
	
	In $\mathcal{D}$, the only rules that read the context are \textsc{(Ident)} and \textsc{(Const)}. In those that do not, we can directly replace $\Gamma$ by $\Gamma'$.
	
	It remains to transform the \textsc{(Ident)} leaves of $\mathcal{D}$ (the case of \textsc{(Const)} is identical).
	Such a leaf concludes $\Gamma\vdash c:\Gamma(c)$. Under $\Gamma'$,
	the rule \textsc{(Ident)} gives $\Gamma'\vdash c:\Gamma'(c)$.
	Moreover, $\Gamma'\vdash\Gamma'(c)$ holds by consistency of
	$\Gamma'$, $\Gamma'\vdash\Gamma(c)$ holds by \lemmaref{cross-wf},
	and $\Gamma'(c)\sim\Gamma(c)$ holds by hypothesis. So we can use the rule \textsc{(Equiv-$\Gamma$)} to obtain that
	$\Gamma'\vdash\Gamma'(c)\equiv\Gamma(c)$, and \textsc{(Conv)}
	to conclude that $\Gamma'\vdash c:\Gamma(c)$, and thus replace the \textsc{(Ident)} rule.
\end{proof}

This last statement allows us to establish the consistency of a context by comparison with a context already known to be consistent.
\begin{ext-prop}[Consistency transfer]\label{prop:consistency-transfer}
	Let $\Gamma$ be a consistent context. If $\Gamma'$ is a context such that $dom(\Gamma)=dom(\Gamma')$ and, for all $v\in dom(\Gamma)$, $\Gamma(v)\sim\Gamma'(v)$ and $\Gamma\vdash \Gamma'(v)$.
	Then $\Gamma'$ is consistent, and so $\Gamma\equiv\Gamma'$.
\end{ext-prop}
\begin{proof}
	First, note that $\Gamma(v)\sim \Gamma'(v)$ implies they have the same rank.
	
	By strong induction on $n$, we prove that, for all $v\in dom(\Gamma)$ such that $\rank{\Gamma(v)}\leqslant n$, we have both
	\begin{itemize}
		\item[(i)] $\Gamma'\vdash \Gamma'(v)$
		\item[(ii)] $\Gamma'\vdash \Gamma(v)$
	\end{itemize}
	
	For $n=0$, then $\Gamma(v)$ is $\Tick$ or a clock
	variable, and $\Gamma'(v)\sim\Gamma(v)$ forces
	$\Gamma'(v)=\Gamma(v)$. Both judgments follow from a single \textsc{(Tick)} or \textsc{(Var)} rule.
	
	For $n\rightarrow n+1$. Consider $v$ such that $\rank{\Gamma(v)}=n+1$.
	
	For (i). By hypothesis, there is a derivation $\mathcal{D}$ of
	$\Gamma\vdash\Gamma'(v)$. Since $\Gamma$ is consistent, \lemmaref{derivation-structure}(a) applies to $\mathcal{D}$. Hence, every \textsc{(Ident)} leaf of	$\mathcal{D}$ with premise $(c, \Gamma(c))$ satisfies
	$\rank{\Gamma(c)}<n+1$. We turn $\mathcal{D}$ into a derivation of $\Gamma'\vdash\Gamma'(v)$.
	
	Every \textsc{(Ident)} leaf with conclusion $\Gamma\vdash c:\Gamma(c)$ is replaced by a derivation of $\Gamma'\vdash c:\Gamma(c)$. The latter is obtained by using the \textsc{(Conv)} rule. Its first premise is $\Gamma'\vdash c:\Gamma'(c)$, which holds by \textsc{(Ident)} since $\Gamma'$ has the same domain as $\Gamma$. The other premise is $\Gamma'\vdash \Gamma(c)\equiv \Gamma'(c)$, itself relying on three more premises, namely $\Gamma'\vdash \Gamma'(c)$, $\Gamma'\vdash \Gamma(c)$, and $\Gamma(c)\sim \Gamma'(c)$.
	The first two hold by induction hypothesis of both (i) and (ii), since $rank(\Gamma(c))\leq n$. The third one holds by hypothesis.
	
	In the rest of $\mathcal{D}$ it is sufficient to just replace $\Gamma$ by $\Gamma'$ as \textsc{(Ident)} is the only context-reading rule (constants cannot
	appear, as subsampling expressions are built from identifiers only).
	The resulting tree is therefore a derivation of
	$\Gamma'\vdash\Gamma'(v)$.
	
	For (ii), by consistency of $\Gamma$, there is a derivation of $\Gamma\vdash\Gamma(v)$, whose \textsc{(Ident)} leaves also satisfy
	$\rank{\Gamma(c)}\leqslant n$ by \lemmaref{derivation-structure}(a). The same
	transformation yields a derivation of $\Gamma'\vdash\Gamma(v)$.
	
	Since every clock has finite rank, point (i) holds for every $v\in dom(\Gamma)$, i.e. $\Gamma'$ is consistent.
\end{proof}

\fi

\subsection{Correctness theorems}\label{sec:correctness}
Let us now state the correctness of our clock inference.
We show it is complete, sound, and has the principal typing property.
To do this, we start by formalizing typability, the driving motivation
behind inference.
\ifextended\else{The complete proofs are provided in the extended version of the paper.}\fi
\begin{definition}
	We say that an expression $e$ is \emph{typable} (or
	\emph{clockable}) in a consistent context $\Gamma$ if and only if
	there is a substitution $\sigma$ that is valid under $\Gamma$ and a
	clock $ck$ such that $\sigma(\Gamma)\vdash e:ck$. For other
	fragments $f$ of the language, we say that $f$ is \emph{typable} (or
	\emph{clockable}) in a context $\Gamma$ if there is a substitution
	$\sigma$ that is valid under $\Gamma$ and such that
	$\sigma(\Gamma)\vdash f$.
\end{definition}

\ifextended
We are now ready to prove the correctness theorems.
\begin{ext-theorem}[Correctness of unification]\label{thm:unif-corr}
	Let $\Gamma$ be a consistent context and $ck_1,ck_2$ two clocks that
	are unifiable under $\Gamma$ and such that $\Gamma\vdash ck_1$ and
	$\Gamma\vdash ck_2$. Then, the unification procedure does not fail on
	$ck_1\doteq ck_2$ \textbf{(completeness)} and it yields a
	substitution $\delta$ (valid under $\Gamma$) satisfying:
	\begin{itemize}
		\item $\delta$ unifies $ck_1\doteq ck_2$ under $\Gamma$ (\textbf{soundness})
		\item for all $\tau$ unifier of $ck_1\doteq ck_2$ under $\Gamma$ we
		have $\Gamma\vdash \delta\lesssim\tau$ (\textbf{Most General Unifier})
	\end{itemize}
\end{ext-theorem}
\begin{proof}
	We proceed by case analysis on $ck_1$ and $ck_2$
	
	\textbf{If $base(ck_1)=base(ck_2)$}, then we must have $subs(ck_1)=subs(ck_2)$, otherwise the clocks would not be unifiable. In this case, unification yields $\delta=\id$ which is a most general unifier.
	
	\textbf{If $base(ck_1)=\var{x}$, and $base(ck_2)=\Tick$ (and symmetrically)}. Then we must have $subs(ck_1)\subseteq subs(ck_2)$, otherwise the clocks would not be unifiable. There are two cases. 
	
	\textbf{The first one assumes $ck_1=\var{x}$}, in which case unification yields $\delta=[\var{x}\mapsto ck_2]$. 
	We show that $\delta$ is valid under $\Gamma$ and that it is the most general unifier, i.e. for any competing unifier $\tau$, valid under $\Gamma$, we have $\Gamma\vdash \delta\lesssim \tau$, i.e. there exists $\gamma$, valid under $\delta(\Gamma)$, and such that $\gamma\circ\delta(\Gamma) \equiv \tau(\Gamma)$.

	\textbf{Step 1: $\delta$ is valid under $\Gamma$.}
	Since $base(ck_2)\neq \var{x}$ we have that $\delta$ is idempotent, and since $\Gamma\vdash ck_2$ by hypothesis, then $\delta$ is also well-formed under $\Gamma$. By \lemmaref{wf-implies-valid}, we have that $\delta$ is valid under $\Gamma$.
	
	\textbf{Step 2: $\Delta\equiv\Delta'$.}
	We now show that $\gamma=\tau|_{dom(\tau)\setminus\{\var{x}\}}$ is the witness we are looking for. By construction we have that $\gamma\circ\delta\sim\tau$.
	We denote $\tau(\Gamma)$ by $\Delta$ and $\gamma\circ\delta(\Gamma)$ by $\Delta'$.
	
	We wish to apply \propref{consistency-transfer} to obtain that $\Delta\equiv\Delta'$. To this end we must show that $\forall v\in dom(\Gamma), \Delta\vdash \Delta'(v)$.
	
	Let $v\in dom(\Gamma)$, we write $\Gamma(v)=\on{\on{\on{\beta}{s_1}}{\ldots}}{s_k}$ for $k\geq 0$.
	Note that, since $\tau$ is a unifier, we have $\tau(\Gamma)\vdash \tau(\var{x})\equiv\tau(ck_2)$.
	If $\beta\neq\var{x}$ then $\Delta(v)=\Delta'(v)$, and so by consistency of $\Delta$ (since $\tau$ is valid under $\Gamma$) we have $\Delta\vdash\Delta'(v)$.
	
	If $\beta=\var{x}$ then we have that
	\begin{align*}
		\delta(\Gamma(v))&=\on{\on{\on{\delta(\var{x})}{s_1}}{\ldots}}{s_k}\\
		&=\on{\on{\on{ck_2}{s_1}}{\ldots}}{s_k}
	\end{align*}
	Thus we have 
	\begin{align*}
		\Delta'(v)&=\on{\on{\on{\tau(ck_2)}{s_1}}{\ldots}}{s_k}\\ \Delta(v)&=\on{\on{\on{\tau(\var{x})}{s_1}}{\ldots}}{s_k}
	\end{align*}
	
	Define  $Q_i=\on{\on{\on{\tau(ck_2)}{s_1}}{\ldots}}{s_i}$ and $P_i = \on{\on{\on{\tau(\var{x})}{s_1}}{\ldots}}{s_i}$, for $i\in\{0, \ldots, k\}$.
	In particular, $Q_0=\tau(ck_2)$, $P_0 = \tau(\var{x})$, $Q_k=\Delta'(v)$ and $P_k=\Delta(v)$.
	
	We want to prove that $\Delta\vdash Q_k$.
	By finite induction we show that $\Delta\vdash Q_i\equiv P_i$ and that $\Delta\vdash s_{i+1}: Q_i$. Note that, by \lemmaref{prefix-wf} we have that $\Delta\vdash s_{i+1}:P_i$ for all $i\leq k$. And since $\tau$ is a unifier, we have $P_i\sim Q_i$ for all $i$.
	For $i=0$, $\Delta\vdash P_0\equiv Q_0$ stems from the fact that $\tau$ is a unifier. We can then apply a \textsc{(Conv)} rule to $\Delta\vdash P_0$ to obtain $\Delta\vdash s_1: Q_0$.
	For $i\rightarrow i+1$ we can use the rule \textsc{(On)} on $\Delta\vdash s_{i+1}:Q_i$ to obtain that $\Delta\vdash Q_{i+1}$ and thus $\Delta\vdash Q_{i+1}\equiv P_{i+1}$ by \textsc{(Equiv-$\Gamma$)}. From there, we can apply \textsc{(Conv)} to $\Delta\vdash s_{i+2}:P_{i+1}$ to deduce $\Delta\vdash s_{i+2}:Q_{i+1}$.
	
	In the end, we obtain $\Delta\vdash Q_k\equiv P_k$ and so in particular $\Delta\vdash Q_k$, i.e. $\Delta\vdash \Delta'(v)$. We can apply \propref{consistency-transfer} to get $\Delta\equiv\Delta'$, and by \lemmaref{equiv-subs-consistent-context} we get $\Gamma\vdash \tau\equiv\gamma\circ\delta$.
	
	\textbf{Step 3: $\gamma$ is valid under $\delta(\Gamma)$.}
	Let $\var{z}\in dom(\gamma)$. We must show $\gamma(\delta(\Gamma))\vdash \gamma(\var{z})$, i.e. $\Delta'\vdash \tau(\var{z})$. Validity of $\tau$ gives $\Delta\vdash \tau(\var{z})$, and by \propref{equiv-soundness}, since $\Delta\equiv\Delta'$, we have the desired result. And thus $\Gamma\vdash\delta\lesssim\tau$.

	\textbf{In the second case $ck_1$ is of the form $\on{\on{\on{\var{x}}{e_1}}{\ldots}}{e_n}$} where $n\geqslant 1$. We show this situation never occurs.
	Indeed, $n\geqslant 1$ implies $subs(ck_1)\neq\varnothing$, and since $subs(ck_1)\subseteq subs(ck_2)$, we have that $subs(ck_1)\cap subs(ck_2)\neq\varnothing$. Let $d$ be in this intersection. By using \lemmaref{ck-sub} on both clock we obtain that $base(ck_1) = base(\Gamma(|d|)) = base(ck_2)$ a contradiction.
	
	\textbf{If $base(ck_1)=\var{x}$ and $base(ck_2)=\var{y}$, for some distinct variables $\var{x}$ and $\var{y}$.} There are two cases.
	
	\textbf{The first one is $ck_1=\var{x}$ or $ck_2=\var{y}$.} Without loss of generality suppose $ck_1=\var{x}$. Unification yields $\delta=[\var{x}\mapsto ck_2]$, and we show that $\delta$ is valid under $\Gamma$ and that it is the most general unifier, i.e. for any competing unifier $\tau$, valid under $\Gamma$, we have $\Gamma\vdash \delta\lesssim \tau$, i.e. there exists $\gamma$, valid under $\delta(\Gamma)$ and such that $\gamma\circ\delta(\Gamma) \equiv \tau(\Gamma)$.
	This case is proven exactly as the previous one.
	
	\textbf{The second one is when $subs(ck_1)\neq\varnothing$ and $subs(ck_2)\neq\varnothing$.} We show this situation never occurs.
	Suppose there is a unifier $\tau$. First we write $ck_1=\on{\on{\on{\var{x}}{e_1}}{\ldots}}{e_n}$ and $ck_2=\on{\on{\on{\var{y}}{f_1}}{\ldots}}{f_m}$. By hypothesis $n\geqslant 1$ and $m\geqslant 1$. Note that $subs(ck_1)\cap subs(ck_2)=\varnothing$, otherwise, by \lemmaref{ck-sub}, a common subsampling would have both base $\var{x}$ and $\var{y}$, impossible.
	
	Let $c=|ebase(e_1)|$ and $d=|ebase(f_1)|$. The idea of the proof is that after unification, there should be some circular dependency on $c$ and $d$.
	Since $\Gamma\vdash ck_1$, we have that $\Gamma\vdash e_1:\var{x}$, and similarly $\Gamma\vdash ck_2$ implies $\Gamma\vdash f_1:\var{y}$.
	By \lemmaref{ck-sub} we have that $\Gamma(c) = \var{x}$ and $\Gamma(d) = \var{y}$.
	
	Since $\tau$ is a unifier, we have $subs(\tau(ck_1))=subs(\tau(ck_2))$. Moreover
	\[
	\begin{aligned}
		subs(\tau(ck_1))&=subs(\tau(\var{x}))\cup\{ebase(e_1), \ldots, ebase(e_n)\}\\
		subs(\tau(ck_2))&=subs(\tau(\var{y}))\cup\{ebase(f_1), \ldots, ebase(f_m)\}
	\end{aligned}
	\]
	With the same argument given above for $subs(ck_1)\cap subs(ck_2)=\varnothing$ we have that $ebase(e_1)\in subs(\tau(\var{y}))$ and $ebase(f_1)\in subs(\tau(\var{x}))$. Now write
	\begin{align*}
		\tau(\var{x})&=\on{\on{\on{\beta}{r_1}}{\ldots}}{r_k}\\
		\tau(\var{y})&=\on{\on{\on{\beta}{s_1}}{\ldots}}{s_\ell}
	\end{align*}
	where $\beta$ is their common base (either $\Tick$ or some clock variable) and $k, \ell\in \NN$.
	
	Let $j$ be the first index at which $ebase(r_j)=ebase(f_1)$.
	Since $\tau$ is valid under $\Gamma$ we have $\tau(\Gamma)\vdash \tau(\var{x})$ and so $\tau(\Gamma)\vdash r_j:\alpha$ where $\alpha=\on{\on{\on{\beta}{r_1}}{\ldots}}{r_{j-1}}$.
	By applying \lemmaref{ck-sub} we obtain that $subs(\tau(\Gamma)(d))\subseteq subs(\alpha)=\{ebase(r_1), \ldots, ebase(r_{j-1})\}$.
	And since $\Gamma(d)=\var{y}$ we have $\tau(\Gamma)(d)=\tau(\var{y})$ which gives $subs(\tau(\var{y}))\subseteq \{ebase(r_1), \ldots, ebase(r_{j-1})\}$.
	
	Similarly for $\tau(\var{x})$ we obtain that $subs(\tau(\var{x}))\subseteq \{ebase(s_1), \ldots, ebase(s_{i-1})\}$ for some $i$ that is the first index at which $ebase(s_i)=ebase(e_1)$.
	
	Now we can establish
	$$
	\begin{aligned}
		ebase(f_1)&\in subs(\tau(\var{x})) \\
		&\subseteq \{ebase(s_1), \ldots, ebase(s_{i-1})\} \\
		&\subseteq subs(\tau(\var{y})) \\
		&\subseteq \{ebase(r_1), \ldots, ebase(r_{j-1})\}
	\end{aligned}
	$$
	Which is a contradiction, since, by minimality of $j$, we have $ebase(f_1)\notin\{ebase(r_1), \ldots, ebase(r_{j-1})\}$.
\end{proof}
\fi

\begin{theorem}[Expression inference correctness]\label{thm:expr-corr}
	Let $e$ be an expression and $\Gamma$ a consistent context in which $e$ is typable. Then inference yields \textbf{(completeness)} a clock $ck$ and a substitution $\delta$ (valid under $\Gamma$) such that
	\begin{itemize}
		\item $\delta(\Gamma)\vdash e: ck$ (\textbf{soundness})
		\item for every substitution $\tau$ valid under $\Gamma$ and witnessing typing for $e$, we have $\Gamma\vdash\delta\lesssim\tau$ (\textbf{principality})
	\end{itemize}
\end{theorem}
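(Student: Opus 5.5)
The proof goes by structural induction on $e$, following the algorithm's hierarchical traversal. The inductive statement is the theorem itself, quantified over all consistent contexts. This lets us apply the hypothesis to subexpressions in intermediate contexts $\delta_i(\Gamma)$.

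For the base cases, $e$ is an identifier or a constant. Inference returns $\delta=\id$ and $ck=\Gamma(e)$. Soundness is rule \textsc{(Ident)} or \textsc{(Const)}. Principality holds with $\gamma=\tau$, since $\tau\equiv\tau\circ\id$ under $\Gamma$ is immediate from the validity of $\tau$.

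For the inductive cases, take $e=\nt{op}(e_1,\ldots,e_n)$ as representative; \textsc{(Fby)}, \textsc{(When)} and \textsc{(Merge)} are analogous, the last two also inferring a clock for $sse$ and generating a constraint against it.

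\textbf{Step 1: infer the first subexpression.} Since $e$ is typable in $\Gamma$ via some $\tau$, each $e_i$ is typable, by inverting with \lemmaref{inversion}. The induction hypothesis on $e_1$ gives $\delta_1$, valid under $\Gamma$, and $ck_1$, with $\delta_1(\Gamma)\vdash e_1:ck_1$ and $\Gamma\vdash\delta_1\lesssim\tau$. Write $\tau\equiv\gamma_1\circ\delta_1$ with $\gamma_1$ valid under $\delta_1(\Gamma)$.

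\textbf{Step 2: transfer typability.} We need $e_2$ to be typable in $\delta_1(\Gamma)$. From $\tau(\Gamma)\vdash e_2:ck$ and $\tau(\Gamma)\equiv\gamma_1(\delta_1(\Gamma))$ (by \lemmaref{equiv-subs-consistent-context}), \propref{equiv-soundness} gives $\gamma_1(\delta_1(\Gamma))\vdash e_2:ck$. Applying the induction hypothesis in the consistent context $\delta_1(\Gamma)$ yields $\delta_2,ck_2$. We continue this way for all arguments, and \propref{consistency-preservation} keeps the earlier judgments valid.

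\textbf{Step 3: unify the argument clocks.} Unify the accumulated clocks $ck_1\doteq ck_2$, and so on, by \thmref{unif-corr}. Its hypotheses need checking. Well-formedness of the $ck_i$ follows from \lemmaref{typing-implies-wf}, carried through substitutions by \lemmaref{wf-preservation}. Unifiability follows because the residual $\gamma$ maps both clocks to clocks of the same subterm in $\tau(\Gamma)$, which are equivalent by the uniqueness-of-typing theorem. Compose the substitutions, with validity given by \lemmaref{valid-subst-composition}, and return $\delta=\delta_u\circ\delta_n\circ\cdots\circ\delta_1$ together with the unified clock. Soundness then follows from \propref{consistency-preservation} and the appropriate typing rule, using \textsc{(Conv)} where equivalence rather than equality is obtained.

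\textbf{Step 4: principality.} This is the main obstacle. We chain the factorizations $\tau\equiv\gamma_1\circ\delta_1$, then $\gamma_1\equiv\gamma_2\circ\delta_2$ under $\delta_1(\Gamma)$, and so on, finally using the MGU property of $\delta_u$ against the residual $\gamma_n$. The delicate points are two:
\begin{itemize}
\item $\lesssim$ is defined only up to $\equiv$ of substitutions, so we must show that $\equiv$ is compatible with composition: if $\gamma\equiv\gamma'$ under $\delta(\Gamma)$, then $\gamma\circ\delta\equiv\gamma'\circ\delta$ under $\Gamma$.
\item A residual that is only equivalent to a unifier must itself be a unifier.
\end{itemize}
I would establish both through \lemmaref{equiv-subs-consistent-context} and \propref{equiv-soundness}, which reduce substitution equivalence to context equivalence, and then prove transitivity of $\lesssim$ as an auxiliary lemma before assembling the chain.
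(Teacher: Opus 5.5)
Your proposal is correct and follows essentially the same route as the paper. It uses structural induction over all consistent contexts and infers the subexpressions in successive contexts, transferring typability via \lemmaref{equiv-subs-consistent-context} and \propref{equiv-soundness}; it then shows the residual is a unifier by uniqueness of typing, and obtains principality by chaining the factorizations with the MGU property of \thmref{unif-corr}. The compatibility of substitution equivalence with right composition that you single out is what the paper uses tacitly in its final chain $\tau\equiv\gamma_1\circ\sigma_1\equiv\cdots\equiv\gamma_3\circ\delta$, so making it explicit through the reduction to context equivalence is a refinement rather than a different approach.
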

\ifextended
\begin{proof}
	By structural induction on $e$. The base cases ($e = x$ and $e = k$) are immediate. Inference returns the identity substitution, and principality holds trivially for $\id$.
	
	All inductive cases follow a uniform scheme. Consider $e=\fby{k}{e'}$ as representative. Inference processes $k$ in $\Gamma$, yielding $(\sigma_1, ck_1)$ such that $\sigma_1(\Gamma)\vdash k:ck_1$, then $e'$ in $\sigma_1(\Gamma)$, yielding $(\sigma_2, ck_2)$ such that $\sigma_2(\sigma_1(\Gamma))\vdash e':ck_2$, and finally unifies $\sigma_2(ck_1)$ with $ck_2$ under $\sigma_2(\sigma_1(\Gamma))$, yielding $\delta_0$. The returned substitution is $\delta = \delta_0 \circ \sigma_2 \circ \sigma_1$.
	
	For \textbf{soundness}. By \lemmaref{subst-preservation}, each factor of $\delta$ preserves the typings established in earlier stages, and the unification supplies the equivalence needed for \textsc{(Conv)}.
	
	For \textbf{principality}. 	
	Given a competing $\tau$ valid over $\Gamma$ such that $\tau(\Gamma) \vdash e : ck$, \lemmaref{inversion} yields $\tau(\Gamma)\vdash k:ck'$ and $\tau(\Gamma) \vdash e' : ck'$ for each subexpression. We show $\Gamma\vdash \delta\lesssim \tau$. The induction hypothesis on $k$ gives $\Gamma \vdash \sigma_1 \lesssim \tau$ with witness $\gamma_1$ valid under $\sigma_1(\Gamma)$. By \lemmaref{equiv-subs-consistent-context}, $\tau(\Gamma) \equiv \gamma_1(\sigma_1(\Gamma))$.
	We transfer the typing of $e'$ to this equivalent context via \propref{equiv-soundness}, so that $\gamma_1$ witnesses typing of $e'$ in $\sigma_1(\Gamma)$. 
	Induction hypothesis on $e'$ yields $\sigma_1(\Gamma) \vdash \sigma_2 \lesssim \gamma_1$ with witness $\gamma_2$ valid under $\sigma_2(\sigma_1(\Gamma))$.
	
	It remains to show that $\sigma_2(\sigma_1(\Gamma))\vdash \delta_0\lesssim \gamma_2$.
	To do so, we prove that $\gamma_2$ unifies $\sigma_2(ck_1)$ and $ck_2$.
	By \lemmaref{subst-preservation} applied to $\gamma_2$, we have
	$\gamma_2(\sigma_2(\sigma_1(\Gamma))) \vdash k : \gamma_2(\sigma_2(ck_1))$
	and $\gamma_2(\sigma_2(\sigma_1(\Gamma))) \vdash e' : \gamma_2(ck_2)$.
	By \lemmaref{equiv-subs-consistent-context},
	$\gamma_2(\sigma_2(\sigma_1(\Gamma))) \equiv \gamma_1(\sigma_1(\Gamma))
	\equiv \tau(\Gamma)$,
	and under this last context, both $k$ and $e'$ have clock $ck'$.
	By \propref{equiv-soundness} and uniqueness of typing, the transferred clocks
	are equivalent, yielding
	$\gamma_2(\sigma_2(\sigma_1(\Gamma))) \vdash
	\gamma_2(\sigma_2(ck_1)) \equiv \gamma_2(ck_2)$.
	
	By the Most General Unifier property of \thmref{unif-corr} we have
	$\sigma_2(\sigma_1(\Gamma)) \vdash \delta_0 \lesssim \gamma_2$
	with witness $\gamma_3$ valid under $\delta(\Gamma)$.
	Under $\Gamma$, composing the three factorizations yields
	$$
	\tau
	\equiv \gamma_1 \circ \sigma_1
	\equiv \gamma_2 \circ \sigma_2 \circ \sigma_1
	\equiv \gamma_3 \circ \delta_0 \circ \sigma_2 \circ \sigma_1
	= \gamma_3 \circ \delta
	$$
	We conclude that $\Gamma \vdash \delta \lesssim \tau$.
	
	The \textbf{when} case follows the same scheme; the only specificity is that the returned clock has the form $\on{\delta_0(ck_s)}{sse}$, whose well-formedness in $\delta(\Gamma)$ follows from \textsc{(On)} applied to the inferred typing of $sse$. The \textbf{merge} case involves two unification steps instead of one but is otherwise identical.
\end{proof}
\fi

The correctness results established for expressions directly extend to the clocking of entire nodes, as a node is defined compositionally over its constituent expressions.
\begin{coro}[Correctness of inference]
	Let $p$ be a node and $\Gamma_0$ its initial context. If $p$ is typable, then the inference procedure does not fail \textbf{(completeness)}, and it yields a substitution $\sigma$, valid over $\Gamma_0$, such that 
	\begin{itemize}
		\item $\sigma(\Gamma_0)\vdash p$ (\textbf{soundness})
		\item for every substitution $\tau$ valid over $\Gamma_0$ and such that $\tau(\Gamma_0)\vdash p$, we have $\Gamma_0\vdash \sigma\lesssim \tau$ (\textbf{principality})
	\end{itemize}
\end{coro}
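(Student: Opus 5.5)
The corollary extends \thmref{expr-corr} from expressions to whole nodes. I will argue by induction on the traversal performed by inference, i.e.\ on the sequence of program fragments (subexpressions, clock annotations, equations, interface lists) processed in hierarchical order. This produces substitutions $\tau_1,\ldots,\tau_N$ and contexts $\Gamma_n=\tau_n(\Gamma_{n-1})$, with $\sigma=\tau_N\circ\cdots\circ\tau_1$.

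\textbf{Invariant.} The invariant I carry is threefold, for each $n$:
\begin{enumerate}
\item $\sigma_n=\tau_n\circ\cdots\circ\tau_1$ is valid under $\Gamma_0$, and $\Gamma_n$ is consistent.
\item $\Gamma_n$ derives every judgment required by the fragments already traversed.
\item For every $\tau$ valid under $\Gamma_0$ with $\tau(\Gamma_0)\vdash p$, we have $\Gamma_0\vdash\sigma_n\lesssim\tau$.
\end{enumerate}

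\textbf{Validity and soundness.} The first point follows from \lemmaref{valid-subst-composition} together with \propref{consistency-preservation}. For the second point, judgments established at earlier steps are transported forward by \propref{consistency-preservation}, which preserves typings and equivalences under valid substitutions.

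\textbf{Each traversal step.} The steps come in three kinds.
\begin{itemize}
\item An \emph{expression} step invokes \thmref{expr-corr} in context $\Gamma_{n-1}$.
\item An \emph{equation} step $x[\annot ck]=e$ unifies the inferred clock of $e$ with $\Gamma_{n-1}(x)$, and with the annotation $ck$ if present. It invokes \thmref{unif-corr}; its hypotheses $\Gamma\vdash ck_i$ hold by \lemmaref{typing-implies-wf}, or, for annotations, because the annotation's clock was itself traversed earlier and checked well-formed.
\item An \emph{instantiation} is handled via inlining, which reduces it to equations.
\end{itemize}
\textsc{(Eqs)}, \textsc{(List)} and \textsc{(Node)} impose no new constraints, so soundness at the end gives $\sigma(\Gamma_0)\vdash p$.

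\textbf{Completeness and principality.} The main obstacle is showing that each intermediate problem is solvable, i.e.\ that a competing $\tau$ also witnesses typability of fragment $n$ in $\Gamma_{n-1}$. Using the invariant, write $\tau\equiv\gamma_{n-1}\circ\sigma_{n-1}$ with $\gamma_{n-1}$ valid under $\Gamma_{n-1}$. By \lemmaref{equiv-subs-consistent-context}, $\gamma_{n-1}(\Gamma_{n-1})\equiv\tau(\Gamma_0)$. Inverting the derivation of $\tau(\Gamma_0)\vdash p$ with \lemmaref{inversion} gives judgments for fragment $n$ under $\tau(\Gamma_0)$. \propref{equiv-soundness} transfers them to $\gamma_{n-1}(\Gamma_{n-1})$, so $\gamma_{n-1}$ witnesses typability, respectively unifiability. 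For equations, uniqueness of typing shows that $\gamma_{n-1}$ is a unifier of the constraint.

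Then \thmref{expr-corr} or \thmref{unif-corr} cannot fail, and yields $\Gamma_{n-1}\vdash\tau_n\lesssim\gamma_{n-1}$ with some witness $\gamma_n$. Chaining the equivalences exactly as in the proof of \thmref{expr-corr} gives $\tau\equiv\gamma_n\circ\sigma_n$, which re-establishes the invariant. At $n=N$ this is principality.
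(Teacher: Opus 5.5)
Your proposal is correct and follows the paper's approach. The paper simply asserts that the expression-level results extend compositionally to nodes. Your traversal invariant (validity, soundness so far, and a witness $\gamma_n$ with $\tau\equiv\gamma_n\circ\sigma_n$) is the sequential-composition and chaining scheme from the proof of \thmref{expr-corr}, iterated over equations and annotations, with each step discharged by \thmref{expr-corr} or \thmref{unif-corr}.

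One small slip: \textsc{(List)} does impose constraints when interface variables carry annotations ($\Gamma\vdash x_i:ck_i$). These are handled by the same \thmref{unif-corr} step you already use for annotated equations.
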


\section{Conclusion\label{sec:conclusion}}
While dataflow principles are pervasive in ML languages and frameworks, the support for data-dependent control remains limited in practice.
Symmetrically in the domain of embedded systems, the clock calculi of dataflow synchronous languages address many of the issues with data-dependent control.
These calculi offer strong guarantees, such as deadlock freedom and static resource allocation.
Unfortunately, backpropagation training, gated mixture of experts, and
other important ML algorithms are not modeled efficiently by
state-of-the-art clock calculi.
This paper extends the Lustre clock calculus to lift this hurdle.
In particular, we allow subsampling Boolean conditions
to commute whenever the presence of data allows it.
The clock congruence of \ccnew\ reorganizes the Lustre clock tree
into a directed acyclic graph, providing a clean definition of clock intersection.
This results in efficient real-time resource allocation without the need for a SAT solver, with a radical improvement of both explainability---the extraction of meaningful error diagnostics---and computational complexity.
We proved the correctness of the new calculus.
We also implemented it in a prototype compiler, powerful enough to implement training algorithms, GMoE, pipelined execution, and reinforcement learning.





Future work includes showcasing the language expressiveness and
efficient code generation on real-world ML applications.
We are particularly interested in concrete illustrations of
how dataflow synchrony facilitates debugging,
correctness proofs and static resource allocation.

\bibliographystyle{IEEEtran} 
\bibliography{biblio}

@misc{gaudelier2026relaxedactivationanalysisdataflow,
      title={Relaxed activation analysis of dataflow networks - A clock calculus for machine learning and real-time scheduling}, 
      author={William Gaudelier and Albert Cohen and Dumitru Potop Butucaru},
      year={2026},
      eprint={2607.21797},
      archivePrefix={arXiv},
      primaryClass={cs.PL},
      url={https://arxiv.org/abs/2607.21797}, 
}

@misc{onnx,
    author = {Junjie Bai and Fang Lu and Ke Zhang et al.},
    title = {ONNX: Open Neural Network Exchange},
    year = {2019},
    publisher = {GitHub},
    journal = {GitHub repository},
    howpublished = {\url{https://github.com/onnx/onnx}}
}

@inproceedings{girault,
author = {A. Girault and X. Nicollin},
year = {2003},
title = {Clock-Driven Automatic Distribution of {Lustre} Programs},
booktitle = {EMSOFT}
}

@ARTICLE{10753468,
  author={T. Bourke and M. Pouzet},
  journal={IEEE Embedded Systems Letters}, 
  title={Lustre, Fast First, and Fresh}, 
  year={2025},
  volume={17},
  number={2},
  }

@article{hugo,
	author = {H. Pompougnac and U. Beaugnon and A. Cohen and D. Potop Butucaru},
	title = {Weaving Synchronous Reactions into the Fabric of {SSA}-form Compilers},
	year = {2022},
	volume = {19},
	number = {2},
	journal = {ACM Trans. Archit. Code Optim.}
}

@Article{endochrony,
	author={Potop-Butucaru, D. and Caillaud, B.},
	title={Correct-by-construction asynchronous implementation of modular
	synchronous specifications},
	journal={Fundamenta Informaticae},
	year={2006},
	volume={CXXVII}
}

@article{multiproc-part,
	title={From Dataflow Specification to Multiprocessor Partitioned Time-triggered Real-time Implementation},
	volume={2},
	journal={Leibniz Transactions on Embedded Systems},
	author={Carle, Thomas and Potop-Butucaru, Dumitru and Sorel, Yves and Lesens, David},
	year={2015},
}

@inproceedings{from-simu,
	author = {Caspi, Paul and Curic, Adrian and Maignan, Aude and Sofronis, Christos and Tripakis, Stavros and Niebert, Peter},
	title = {From {Simulink} to {SCADE}/{Lustre} to {TTA}: A Layered Approach for Distributed Embedded Applications},
	year = {2003},
	booktitle = {LCTES}
}

@InProceedings{lustreRTSS,
  author = 	 {J.L. Bergerand and P. Caspi and D. Pilaud and N. Halbwachs
                  and E. Pilaud},
  title = 	 {Outline of a Real Time Data Flow Language},
  booktitle =    {Proceedings RTSS},
  year = 	 {1985}
}

@inproceedings{Lustre,
author = {Caspi, P. and Pilaud, D. and Halbwachs, N. and Plaice, J. A.},
title = {LUSTRE: a declarative language for real-time programming},
year = {1987},
isbn = {0897912152},
publisher = {Association for Computing Machinery},
address = {New York, NY, USA},
url = {https://doi.org/10.1145/41625.41641},
doi = {10.1145/41625.41641},
booktitle = {Proceedings of the 14th ACM SIGACT-SIGPLAN Symposium on Principles of Programming Languages},
pages = {178–188},
numpages = {11},
location = {Munich, West Germany},
series = {POPL '87}
}

@inproceedings{Coh06,
	author = {Cohen, Albert and Duranton, Marc and Eisenbeis, Christine and Pagetti, Claire and Plateau, Florence and Pouzet, Marc},
	title = {N-synchronous {Kahn} networks: a relaxed model of synchrony for real-time systems},
	year = {2006},
	publisher = {Association for Computing Machinery},
	booktitle = {POPL}
}

@inproceedings{shazeer2017outrageously,
	title={Outrageously Large Neural Networks: The Sparsely-Gated Mixture-of-Experts Layer},
	author={N. Shazeer and A. Mirhoseini and K. Maziarz and A. Davis and Q. Le and G. Hinton and J. Dean},
	booktitle={ICLR},
	year={2017}
}

@book{astrom1997computer,
	title={Computer-Controlled Systems: Theory and Design},
	author={K. J. {\AA}str{\"o}m and B. Wittenmark},
	year={1997},
	publisher={Prentice Hall}
}

@inproceedings{gu2024mamba,
	title={Mamba: Linear-Time Sequence Modeling with Selective State Spaces},
	author={Gu, Albert and Dao, Tri},
	booktitle={COLM},
	year={2024}
}

@inproceedings{rybakov2020streaming,
	title={Streaming Keyword Spotting on Mobile Devices},
	author={O. Rybakov and N. Kononenko and N. Adam and S. Subrahmanya and F. Beaufays},
	booktitle={Interspeech},
	year={2020},
}

@book{goodfellow2016deep,
	title={Deep Learning},
	author={I. Goodfellow and Y. Bengio and A. Courville},
	publisher={MIT Press},
	year={2016}
}

@InProceedings{lushist,
	author = 	 {N. Halbwachs},
	title = 	 {A synchronous language at work: the story of {Lustre}},
	booktitle = {MEMOCODE},
	year = 	 {2005}
}

@inproceedings{emsoft09,
	author = {D. Potop-Butucaru and R. de Simone and Y. Sorel and J.-P. Talpin},
	title = {Clock-driven distributed real-time implementation of endochronous synchronous programs},
	year = {2009},
	booktitle = {Proceedings EMSOFT}
}

@inproceedings{cai2016tensorflow,
	title        = {TensorFlow Debugger: Debugging Dataflow Graphs for Machine Learning},
	author       = {Shanqing Cai and Eric Breck and Eric Nielsen and Michael Salib and D. Sculley},
	booktitle    = {Proceedings of the Reliable Machine Learning in the Wild Workshop at NIPS},
	year         = {2016}
}

@article{tambon2021silent,
	title={Silent bugs in deep learning frameworks: an empirical study of Keras and TensorFlow},
	author={Tambon, Florian and Nikanjam, Amin and Lavall{\'e}e, Foutse and Antoniol, Giuliano},
	journal={Empirical Software Engineering},
	volume={26},
	number={4},
	year={2021}
}

@Misc{xla,
	author = 	 {Sanjoy Das},
	title = 	 {Control Flow in {TensorFlow} {\&} {XLA}'s Auto-Clustering},
	howpublished = {TensorFlow dev team blog post},
	month = 	 {oct},
	year = 	 {2018},
	note = 	 {\url{https://www.playingwithpointers.com/blog/control-flow-and-tf-xla.html}}
}

@inproceedings{GeilenBasten2010KPN,
	title     = {Kahn Process Networks and a Reactive Extension},
	author    = {M.C.W. Geilen and Twan Basten},
	editor    = {Shuvra S. Bhattacharyya and Ed F. Deprettere and Rainer Leupers and Jarmo Takala},
	booktitle = {Handbook of Signal Processing Systems},
	year      = {2010},
	publisher = {Springer}
}

@inproceedings{kahn1974semantics,
	title={The Semantics of a Simple Language for Parallel Programming},
	author={Kahn, Gilles},
	booktitle={Information Processing 74, Proceedings of the IFIP Congress 74},
	year={1974},
	publisher={North-Holland Publishing Co.},
}

@inproceedings{10.1145/1967677.1967688,
author = {Gamatie, Abdoulaye and Gonnord, Laure},
title = {Static analysis of synchronous programs in signal for efficient design of multi-clocked embedded systems},
year = {2011},
publisher = {Association for Computing Machinery},
series = {LCTES}
}

@inproceedings{10.1145/207110.207134,
author = {Amagb\'{e}gnon, Pascalin and Besnard, Lo\"{\i}c and Le Guernic, Paul},
title = {Implementation of the data-flow synchronous language SIGNAL},
year = {1995},
publisher = {Association for Computing Machinery},
series = {PLDI}
}

@article{BENVENISTE1991103,
title = {Synchronous programming with events and relations: the SIGNAL language and its semantics},
journal = {Science of Computer Programming},
volume = {16},
number = {2},
year = {1991},
author = {Albert Benveniste and Paul {Le Guernic} and Christian Jacquemot},
}

@TechReport{tfgraph,
	author = 	 {The TensorFlow authors},
	title = 	 {Implementation of Control Flow in TensorFlow},
	year = 	 {2017},
	url = 	 {http://download.tensorflow.org/paper/white_paper_tf_control_flow_implementation_2017_11_1.pdf},
}

@inproceedings{yu2018dynamic,
	title={Dynamic control flow in large-scale machine learning},
	author={Yuan Yu, Yuan and Abadi, Martín and Barham, Paul and Brevdo, Eugene and Burrows, Mike and Davis, Andy and Dean, Jeff and Ghemawat, Sanjay and Harley, Tim and Hawkins, Peter and others},
	booktitle={EuroSys},
	year={2018}
}

@article{flink,
	author = {P. Carbone and A. Katsifodimos and S. Ewen and V. Markl and S. Haridi and K. Tsoumas},
	title = {Apache Flink: Stream and Batch Processing in a Single Engine},
	year = {2015},
	journal={IEEE Data(base) Engineering Bulletin},
	volume={36},
}

@Book{simulink,
	author = 	 {H. Klee and R. Allen},
	title = 	 {Simulation of Dynamic Systems with MATLAB and Simulink},
	publisher = 	 {CRC Press},
	year = 	 {2018}
}

@article{CaP96,
	author = {Caspi, Paul and Pouzet, Marc},
	title = {Synchronous {Kahn} {networks}},
	year = {1996},
	publisher = {Association for Computing Machinery},
	volume = {31},
	number = {6},
	journal = {SIGPLAN Not.},
}

@inproceedings{CoP03,
	author = {Cola\c{c}o, Jean-Louis and Pouzet, Marc},
	editor = {Alur, Rajeev and Lee, Insup},
	title = {Clocks as First Class Abstract Types},
	booktitle = {Embedded Software},
	year = {2003}
}

@article{Bal08,
	author = {Biernacki, Dariusz and Cola\c{c}o, Jean-Louis and Hamon, Gregoire and Pouzet, Marc},
	title = {Clock-directed modular code generation for synchronous data-flow languages},
	year = {2008},
	volume = {43},
	number = {7},
	journal = {SIGPLAN Not.}
}

@incollection{Ba01,
	title = {Chapter 8 - {Unification Theory}},
	editor = {Alan Robinson and Andrei Voronkov},
	booktitle = {Handbook of Automated Reasoning},
	year = {2001},
	author = {Franz Baader and Wayne Snyder and Paliath Narendran and Manfred Schmidt-Schauss and Klaus Schulz}
}

@inproceedings{DamasMilner82,
	author = {Damas, Luis and Milner, Robin},
	title = {Principal type-schemes for functional programs},
	year = {1982},
	publisher = {ACM SIGPLAN-SIGACT},
	booktitle = {POPL '82}
}

\end{document}